\documentclass[12pt]{article}
\usepackage{amssymb}
\usepackage{amsmath}
\usepackage{amsthm}
\usepackage{setspace}
\usepackage{array}
\usepackage{times}
\usepackage{float}
\usepackage{bm}
\usepackage{algorithm}
\usepackage{tcolorbox} 

\usepackage[colorlinks=true,citecolor=blue,urlcolor=blue,linkcolor=blue]{hyperref}
\usepackage[round]{natbib}

\newcommand{\RN}[1]{%
  \textup{\uppercase\expandafter{\romannumeral#1}}%
}
\usepackage{tikz}
\usetikzlibrary{shapes,arrows,arrows.meta}
\tikzstyle{decision} = [diamond, draw, text width=1em, text badly centered, node distance=3cm, inner sep=0pt]
\tikzstyle{block} = [circle, draw, text width=1em, text centered, minimum height=2em]

\usepackage{graphicx}
\usepackage{enumerate}
\usepackage{latexsym}
\usepackage{verbatim}
\usepackage{color}
\usepackage{longtable}
\usepackage{booktabs}
\usepackage{subcaption}

\usepackage{cite}
\usepackage[dvips, letterpaper, nohead, top=1.3in, bottom=1.3in, left=1.3in, right=1.3in]{geometry}

\newtheorem{theorem}{Theorem}[section]

\newtheorem{definition}{Definition}[section]
\newtheorem{procedure}{Procedure}[section]
\newtheorem{proposition}{Proposition}[section]

\newtheorem{remark}{Remark}[section]
\newtheorem{example}{Example}[section]

\normalsize

\usepackage{xr}
\usepackage{authblk}

\title{Weighted Holm Procedures: Theory, Properties, and Recommendations}

\author[1]{Beibei Li}
\author[2]{Wenge Guo\thanks{Author e-mail addresses: beibei.li@abbvie.com; wenge.guo@njit.edu}}

\affil[1]{Data and Statistical Sciences, AbbVie Inc., Florham Park, NJ, USA}
\affil[2]{Department of Mathematical Sciences, New Jersey Institute of Technology, Newark, NJ, USA}

\date{\today}

\begin{document}
\maketitle

\begin{abstract}
\noindent In many statistical applications, particularly in clinical studies, hypotheses may carry different levels of importance, motivating the use of weighted multiple testing procedures (wMTPs) to control the familywise error rate (FWER). Among these approaches, two weighted Holm procedures are commonly used: the weighted Holm procedure (WHP), which is based on ordered weighted $p$-values, and the weighted alternative Holm procedure (WAP), which relies on ordered raw $p$-values. 
This paper provides a systematic comparison of these two procedures, along with practical recommendations for their use. We first examine their corresponding closed testing procedures (CTPs) and show that WHP is uniformly more powerful than WAP. We further investigate their structural properties, demonstrating that WAP, while consonant, lacks monotonicity. 
To facilitate communication with non-statisticians, we introduce graphical representations of both procedures using a common initial graph and distinct updating strategies. In addition, we derive adjusted $p$-values and adjusted weighted $p$-values for both methods. Finally, we establish an optimality result: WHP cannot be improved by enlarging any of its critical values without violating FWER control, whereas WAP is optimal only under specific conditions. Simulation studies support these theoretical findings and highlight the superior FWER control and average power of WHP.
\end{abstract}

\noindent \textbf{KEY WORDS:} weighted Holm procedures; closed testing; graphical methods; adjusted $p$-values; optimality; FWER.

	
\section{Introduction}	

\noindent In many clinical applications, hypotheses differ in importance, this motivates the use of weighted multiple testing procedures {(wMTPs). Several such methods have been developed, including weighted Bonferroni, Simes, and resampling-based approaches \citep{rosenthal1983ensemble, holm1979simple, benjamini1997multiple, tamhane2008weighted, westfall1993resampling}. A weighted Bonferroni procedure, for example, can increase power for more important hypotheses. Among stepwise methods, two weighted Holm procedures are widely used: the weighted Holm procedure (WHP; \citeauthor{holm1979simple}, \citeyear{holm1979simple}), which is based on ordered weighted $p$-values, and the weighted alternative Holm procedure (WAP; \citeauthor{benjamini1997multiple}, \citeyear{benjamini1997multiple}), which relies on ordered raw $p$-values. WAP is often viewed as more objective since its ordering is independent of weights, but this comes at the cost of losing monotonicity, a key property in stepwise testing. Prior studies have examined weighted procedures under pre-ordered hypotheses \citep{maurer1995multiple, wiens2003fixed, wiens2013selection} and weighted Simes-type approaches \citep{tamhane2008weighted}, but the relative merits of ordering by raw versus weighted $p$-values remain unresolved. This motivates a careful comparison of WHP and WAP to clarify their statistical properties, evaluate their performance in terms of familywise error rate (FWER) control and power, and provide practical guidance. We show that WHP is uniformly more powerful than WAP, establish new optimality results, and demonstrate the advantages of WHP through theory and simulation.

\smallskip
Bonferroni-type weighted procedures can also be formulated as closed testing procedures (CTPs), which offer a rigorous framework for FWER control \citep{marcus1976closed, brannath2010shortcuts, henning2015closed}. However, as the number of hypotheses grows, the number of intersection tests increases rapidly, making CTPs difficult to apply in practice. Graphical methods provide a more accessible alternative: they can represent a wide range of multiple testing procedures (MTPs), are easier to communicate to clinical teams than lengthy decision tables, and reduce programming effort. They are now widely used in clinical research \citep{bretz2009graphical, burman2009recycling, bretz2011graphical}. Weighted procedures also underpin key gatekeeping strategies, including serial, parallel, and mixture approaches 
\citep{maurer1995multiple, westfall2001optimally, dmitrienko2003gatekeeping, dmitrienko2007gatekeeping, dmitrienko2013general}. Developing simple and powerful weighted procedures is therefore essential for advancing these frameworks. Moreover, their optimality—whether they can be improved without sacrificing error control—remains largely unexplored, especially for WHP and WAP.

\smallskip
Although this paper does not address the choice of weights, their importance should be emphasized. Weights may be assigned based on the a priori importance of hypotheses or prior knowledge \citep{westfall2001optimally}, or chosen adaptively from data to improve power while preserving significance levels. Several approaches have been proposed for data-dependent weighting that still maintain familywise or generalized familywise error control \citep{finos2007fdr, kang2009weighted, dalmasso2008weighted, westfall2004weighted, wang2019weighted}. In addition, weighted parametric procedures that exploit the joint distribution of test statistics have been studied \citep{xie2012weighted, xi2017unified}.

\smallskip
In summary, we compare WHP and WAP through their closed testing formulations, graphical representations, adjusted $p$-values, and optimality, and provide practical recommendations. We establish that WHP is uniformly more powerful than WAP. Although WAP lacks monotonicity—both with respect to raw $p$-values and within the closed testing framework—it remains consonant, which enables computational shortcuts, and its graphical formulation extends beyond the settings of \citet{hommel2007powerful} and \citet{bretz2009graphical}. Regarding optimality, WHP cannot be improved without sacrificing FWER control, whereas WAP is optimal only under specific weight-ratio conditions. Simulation studies and clinical examples further demonstrate that WHP consistently outperforms WAP, and that judiciously chosen weights enhance the power of both procedures, particularly WAP.  

The remainder of the paper is organized as follows. Section~\ref{prem} introduces the notation. Section~\ref{ctps} presents the CTPs and associated monotonicity results. Section~\ref{graphs} develops graphical representations. Section~\ref{adjs} derives adjusted $p$-values. Section~\ref{optimality} establishes optimality properties. Section~\ref{simul} presents simulation results. Section~\ref{realdata} illustrates applications with clinical examples. Section~\ref{summary2} concludes. All proofs are collected in Section~\ref{append}.

\section{Preliminaries}\label{prem}
    
In this section, we introduce notation and describe the two weighted Holm procedures studied in this paper.  

\subsection{Basic Notation}
We consider the simultaneous testing of $m$ hypotheses $H_1,\ldots,H_m$ with corresponding $p$-values $P_1,\ldots,P_m$ and positive weights $w_1,\ldots,w_m$. The weighted $p$-values are defined as  
\[
\tilde{P}_i = \frac{P_i}{w_i}, \quad i=1,\ldots,m.
\]  

Let $P_{(1)} \leq \cdots \leq P_{(m)}$ denote the ordered raw $p$-values with associated hypotheses $H_{(1)},\ldots,H_{(m)}$ and weights $w_{(1)},\ldots,w_{(m)}$. Similarly, let $\tilde{P}_{(1)} \leq \cdots \leq \tilde{P}_{(m)}$ denote the ordered weighted $p$-values, with corresponding hypotheses $H^{*}_{(1)},\ldots,H^{*}_{(m)}$ and weights $w^{*}_{(1)},\ldots,w^{*}_{(m)}$.  

Suppose that $m_0$ of the hypotheses are true nulls and $m_1 = m - m_0$ are false nulls, and let $I_0$ be the index set of the true null hypotheses. Denote by $R$ the total number of rejections and by $V$ the number of true nulls rejected. The FWER is defined as  
\[
\text{FWER} = \Pr(V \geq 1),
\]  
i.e., the probability of making at least one type~I error \citep{tamhane2018advances}. Our goal is to control the FWER at a pre-specified significance level $\alpha \in (0,1)$.  

Throughout, we assume that for each $i \in I_0$, the corresponding $p$-value satisfies the standard super-uniform condition  
\[
\Pr(P_i \leq x) \leq x, \quad \forall\, x \in (0, 1),
\]  
without imposing any restrictions on their joint distribution.

\subsection{Two weighted Holm procedures}
We now describe two weighted Holm procedures that are widely used in practice. The key difference lies in how the hypotheses are ordered: WHP orders by weighted $p$-values, while WAP orders by raw $p$-values. This choice affects both their properties and performance.  

\begin{procedure}\citep{holm1979simple}. \label{whp}  
\textbf{WHP.} ~Reject $H^*_{(i)}$ if  
\begin{equation}
    \tilde{P}_{(j)} \leq \frac{\alpha}{\sum_{k=j}^{m} w^*_{(k)}}, \quad j=1,\ldots,i.
\end{equation}
Here, the hypotheses are ordered according to the weighted $p$-values. By incorporating weights into the ordering, WHP gives priority to hypotheses deemed more important, extending the idea of the weighted Bonferroni test.  
\end{procedure}  

\begin{procedure}\citep{benjamini1997multiple}. \label{wap}  
\textbf{WAP.} ~Reject $H_{(i)}$ if  
\begin{equation}
    P_{(j)} \leq \frac{w_{(j)}}{\sum_{k=j}^{m} w_{(k)}} \alpha, \quad j=1,\ldots,i.
\end{equation}
In contrast, WAP orders hypotheses by their raw $p$-values, regardless of their assigned weights. This makes the procedure appear more objective, but as we will see, it also leads to a loss of certain monotonicity properties.  
\end{procedure}  

\begin{remark}\label{divergingweights} 
It is worth noting that when the weights exhibit highly uneven magnitudes (e.g., diverging weights such that $\sum_{k=1}^{m} w_{k} = 1$, $w_{1} \to 1$, and $w_{k+1}/w_{k} \to 0$ for $k = 1, \ldots, m-1$), WHP and WAP display markedly different behavior. Under such configurations, ordering hypotheses by weighted $p$-values causes WHP to effectively reduce to a fixed-sequence testing procedure, with the testing order largely determined by the weights \citep{westfall2001optimally}. In contrast, WAP continues to order hypotheses according to their raw $p$-values, while its rejection thresholds depend on the weights. This decoupling of ordering and stopping rules highlights an important conceptual distinction between the two procedures.
\end{remark}

\smallskip
To compare WHP and WAP, we focus on four key aspects:  
(i) their CTPs, emphasizing monotonicity and consonance;  
(ii) graphical representations that offer intuitive visualization;  
(iii) adjusted $p$-values and adjusted weighted $p$-values; and  
(iv) optimality, which characterizes the decision rules and their relation to the global significance level~$\alpha$.

\section{Underlying CTPs and theoretical results}\label{ctps}

In this section, we study the underlying CTPs of the weighted Holm procedures. This perspective clarifies their statistical properties and enables direct comparison. We also examine their power, monotonicity, and consonance, and illustrate the differences through examples.  

\subsection{CTP representations of WAP and WHP}

We first show that both WAP and WHP can be expressed as CTPs. This provides a unified framework for studying their properties.  

For any intersection hypothesis $H_I=\bigcap_{i \in I} H_i$, with $I \subseteq\{1, \ldots, m\}$, let
$$
P_{(1)}^I=\min _{i \in I} P_i
$$
denote the smallest $p$-value in $I$, and let $w_{(1)}^I$ be its associated weight.

The local test for WAP rejects $H_I$ if  
\begin{equation}\label{waplocal}
    P_{(1)}^I \leq \frac{w_{(1)}^I}{\sum_{i \in I} w_i}\alpha.
\end{equation}  

The type I error rate of this test satisfies  
\[
\Pr\!\left(P_{(1)}^I \leq \tfrac{w_{(1)}^I}{\sum_{i \in I} w_i}\alpha \right) 
   \leq \sum_{i \in I} \Pr\!\left(P_i \leq \tfrac{w_i}{\sum_{i \in I} w_i}\alpha \right) \leq \alpha ,
\]  
so by the closure principle, the resulting procedure controls the FWER.  

\begin{proposition}\label{eqwap}
The CTP with local tests defined in \eqref{waplocal} is equivalent to the WAP of \citet{benjamini1997multiple}.  
\end{proposition}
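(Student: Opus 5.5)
The plan is to argue directly from the closure principle: the CTP rejects $H_i$ if and only if every intersection hypothesis $H_I$ with $i\in I$ is rejected by the local test \eqref{waplocal}. Throughout, I use the raw ordering $P_{(1)}\le\cdots\le P_{(m)}$ of Procedure~\ref{wap}, with ties broken by a fixed rule. The same rule selects the ``minimal'' element of each $I$, so that $w^I_{(1)}$ is well defined and consistent with the global ordering.

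Let $s$ be the first index at which the WAP inequality fails, with $s=m+1$ if none fails. WAP then rejects exactly $H_{(1)},\ldots,H_{(s-1)}$. The goal is to show that the CTP rejects exactly the same set.

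\emph{Step 1: WAP rejection implies CTP rejection.} Fix $i<s$, so the inequalities for $j=1,\ldots,i$ all hold. Take any $I$ with $(i)\in I$, and let $j$ be the smallest rank index of an element of $I$. Then $j\le i$ and $P^I_{(1)}=P_{(j)}$, with associated weight $w^I_{(1)}=w_{(j)}$. Since $I\subseteq\{(j),(j+1),\ldots,(m)\}$ and the weights are positive,
\[
\sum_{k\in I} w_k \le \sum_{k=j}^m w_{(k)} .
\]
Hence
\[
\frac{w_{(j)}}{\sum_{k\in I}w_k}\,\alpha \;\ge\; \frac{w_{(j)}}{\sum_{k=j}^m w_{(k)}}\,\alpha \;\ge\; P_{(j)} = P^I_{(1)},
\]
where the last inequality is the WAP step-$j$ inequality. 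So every such $H_I$ is rejected, and the CTP rejects $H_{(i)}$.

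\emph{Step 2: WAP acceptance implies CTP acceptance.} Fix $i\ge s$ with $s\le m$. Consider the single intersection $I=\{(s),\ldots,(m)\}$, which contains $(i)$. Its minimal $p$-value is $P_{(s)}$, with weight $w_{(s)}$, and $\sum_{k\in I}w_k=\sum_{k=s}^m w_{(k)}$. The local test \eqref{waplocal} for $H_I$ therefore coincides exactly with the WAP step-$s$ inequality, which fails. So $H_I$ is not rejected, and by closure $H_{(i)}$ is not rejected.

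Combining the two steps shows that the rejection sets coincide. The only delicate point, which I expect to be the main obstacle, is ties among the raw $p$-values. With ties, both the WAP ordering and the identity of $w^I_{(1)}$ depend on the tie-breaking rule. The argument therefore needs the same rule in both places, so that in Step 1 the minimal element of $I$ is the one of smallest rank. Under a continuity assumption, ties occur with probability zero and the issue disappears.
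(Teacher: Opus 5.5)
Your proof is correct and rests on the same key idea as the paper's: any $I$ whose smallest-rank element is $(j)$ satisfies $I \subseteq \{(j),\ldots,(m)\}$, so its local threshold is at least that of $H_{\{(j),\ldots,(m)\}}$, and that test is exactly the WAP step-$j$ inequality. You spell out both directions explicitly, whereas the paper folds your Step 2 into a single appeal to the closure principle, and you add the tie-breaking caveat that the paper leaves implicit.
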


\smallskip
For WHP, each intersection hypothesis $H_I$ is tested using a weighted Bonferroni test, which rejects $H_I$ if
\begin{equation}\label{whplocal}
    P_i \leq \frac{w_i}{\sum_{i \in I} w_i}\alpha \quad \text{for some } i \in I.
\end{equation}  

This test is valid since  
\[
\Pr\!\left( \cup_{i \in I} \{ P_i \leq \tfrac{w_i}{\sum_{i \in I} w_i}\alpha \}\right) 
  \leq \sum_{i \in I} \Pr\!\left(P_i \leq \tfrac{w_i}{\sum_{i \in I} w_i}\alpha \right) \leq \alpha .
\]  

\begin{proposition} \citep{westfall2001optimally}. \label{eqwhp}
The CTP with local tests defined in \eqref{whplocal}, that is, the weighted Bonferroni test, is equivalent to \citet{holm1979simple}'s  WHP.  
\end{proposition}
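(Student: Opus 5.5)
We prove the two inclusions between the rejection sets: every hypothesis rejected by WHP (Procedure~\ref{whp}) is rejected by the CTP with local tests \eqref{whplocal}, and conversely. Write $W_I=\sum_{i\in I}w_i$. The local test \eqref{whplocal} rejects $H_I$ if and only if $\min_{i\in I}\tilde P_i\le \alpha/W_I$. The first step is therefore to rewrite each local test in terms of weighted $p$-values, so that it depends only on the smallest weighted $p$-value in $I$ and on the total weight of $I$.

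\emph{WHP $\Rightarrow$ CTP.} Suppose WHP rejects $H^*_{(1)},\ldots,H^*_{(i)}$. Fix $j\le i$ and any $I\ni$ index of $H^*_{(j)}$. We must show that the local test rejects $H_I$. Let $r$ be the smallest rank, in the weighted ordering, of an index in $I$; clearly $r\le j$. Since $H^*_{(r)},H^*_{(r+1)},\ldots$ include all of $I$, we get
\[
W_I\le \sum_{k=r}^m w^*_{(k)}.
\]
The step-down condition at step $r$ then gives
\[
\min_{i\in I}\tilde P_i=\tilde P_{(r)}\le \alpha\Big/\sum_{k=r}^m w^*_{(k)}\le \alpha/W_I,
\]
so $H_I$ is rejected. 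As $I$ was arbitrary, $H^*_{(j)}$ is rejected by the CTP.

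\emph{CTP $\Rightarrow$ WHP.} Suppose WHP stops at step $s$, i.e.\ $\tilde P_{(s)}>\alpha/\sum_{k\ge s}w^*_{(k)}$, and take any $j\ge s$. Consider $I=\{\text{indices of }H^*_{(s)},\ldots,H^*_{(m)}\}$. This set contains $H^*_{(j)}$, and its minimum weighted $p$-value is $\tilde P_{(s)}$ with $W_I=\sum_{k\ge s}w^*_{(k)}$. Hence the local test does not reject $H_I$, so the CTP does not reject $H^*_{(j)}$. Together with the first direction, this shows the two rejection sets coincide.

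The main subtlety is ties among the weighted $p$-values: ordering ambiguity could make ``smallest rank'' ill-defined. I will fix an arbitrary tie-breaking rule and note that both the step-down thresholds and the local tests depend only on the values, so the argument is unaffected. Since this is a known result \citep{westfall2001optimally}, the proof in the paper can be brief and simply present these two inclusions.
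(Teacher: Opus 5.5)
Your proof is correct and is essentially the paper's argument: the paper proves Proposition~\ref{eqwap} by noting that if the smallest $p$-value in $I$ has rank $r$, then $I\subseteq I^+_{(r)}$, so $\sum_{i\in I}w_i\le\sum_{k\ge r}w_{(k)}$, and rejection of the tail hypothesis $H_{I^+_{(r)}}$ forces rejection of $H_I$; it then applies closure and states that the same reasoning under the weighted ordering gives Proposition~\ref{eqwhp}. Your two inclusions make both directions of that closure step explicit, including the converse via the tail set $\{(s),\ldots,(m)\}$, which the paper leaves implicit. On ties, the WHP thresholds after a tied position can depend on the tie-break, contrary to your remark, but your argument holds verbatim for any fixed tie-breaking rule, so nothing is lost.
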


\subsection{Power, monotonicity, and consonance}

By comparing the local tests of WAP and WHP, we can establish a power dominance result.  

\begin{theorem}\label{thm1}
For any hypothesis $H_i$, if it is rejected by WAP, it is also rejected by WHP. Thus, WHP is uniformly more powerful than WAP under arbitrary dependence. 
\end{theorem}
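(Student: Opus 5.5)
We argue through the closed testing representations of Propositions~\ref{eqwap} and~\ref{eqwhp}, which turns the comparison of the two stepwise procedures into a comparison of their local tests. The plan is to show that, for every nonempty $I \subseteq \{1,\ldots,m\}$, the WAP local test \eqref{waplocal} rejecting $H_I$ implies that the weighted Bonferroni local test \eqref{whplocal} rejects $H_I$. The theorem then follows at once from the closure principle. By Proposition~\ref{eqwap}, $H_i$ is rejected by WAP if and only if every $H_I$ with $i \in I$ is rejected by \eqref{waplocal}. By the local-test implication, every such $H_I$ is then also rejected by \eqref{whplocal}, and Proposition~\ref{eqwhp} says this is exactly rejection of $H_i$ by WHP.

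The key step is the pointwise inclusion of rejection regions, and it is immediate. Fix $I$ and let $i_0 \in I$ be an index attaining $P_{(1)}^I = \min_{i\in I} P_i$, so that $w_{(1)}^I = w_{i_0}$. If \eqref{waplocal} holds, then
\[
P_{i_0} \leq \frac{w_{i_0}}{\sum_{i\in I} w_i}\alpha .
\]
This is precisely the condition in \eqref{whplocal} with the index $i = i_0$. Hence the event ``some $i \in I$ satisfies $P_i \le w_i\alpha/\sum_{k\in I} w_k$'' occurs.

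Ties in the minimum $p$-value need a brief check, since $w_{(1)}^I$ could be ambiguous when several indices attain the minimum. Whichever tied index is used to define $w_{(1)}^I$, that same index witnesses \eqref{whplocal}, so the inclusion holds under any tie-breaking convention. The argument is purely deterministic: it holds for every realization of $(P_1,\ldots,P_m)$. The dominance therefore holds regardless of the joint distribution of the $p$-values, which gives the ``under arbitrary dependence'' part of the statement.

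I expect no real obstacle here. The only point needing care is the bookkeeping that both procedures are exactly their CTPs, which Propositions~\ref{eqwap} and~\ref{eqwhp} already supply, so the proof reduces to the one-line inclusion above. Optionally, one could note that the inclusion can be strict, because \eqref{whplocal} may be satisfied by a non-minimal $P_i$ with a large weight. A small example with $m=2$ would show that the dominance is not an equality, though the statement does not require this.
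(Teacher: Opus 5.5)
Your proposal is correct and follows essentially the same route as the paper, which also proves that for each $H_I$ the WAP local-test rejection event is contained in the weighted Bonferroni rejection event (with the minimal-$p$-value index as the witness), and then concludes via the closure principle together with Propositions~\ref{eqwap} and~\ref{eqwhp}. Your remarks on tie-breaking, and on the argument being pointwise and therefore valid under arbitrary dependence, are sound refinements that the paper leaves implicit.
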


\noindent \textit{Proof}. 
For each intersection hypothesis $H_I = \cap_{i \in I} H_i$, the rejection event  
\[
\{ P_{(1)}^I \leq \tfrac{w_{(1)}^I}{\sum_{i \in I} w_i}\alpha \}
\]
of WAP’s local test implies  
\[
\bigcup_{i \in I}\{ P_i \leq \tfrac{w_i}{\sum_{i \in I} w_i}\alpha \},
\]
the rejection event of WHP’s local test. Hence, by the closure principle and Propositions~3.1 and~3.2, any $H_i$ rejected by WAP must also be rejected by WHP. \hfill$\square$

\smallskip
Notably, the converse does not hold in general; that is, there exist configurations in which WHP rejects certain hypotheses that WAP fails to reject. This phenomenon is illustrated in Example~\ref{ex2} in Section~\ref{ctps}.

\begin{remark}
The critical values of WAP lack the monotone property \citep{benjamini1997multiple}, which causes power loss. If the weights do not alter the ordering of $p$-values, i.e., $\tilde{P}_{(i)} = P_{(i)}/w_{(i)}$, then WHP and WAP coincide since  
\[
\tilde{P}_{(i)} \leq \frac{\alpha}{\sum_{k=i}^m w_{(k)}^*} \iff 
P_{(i)} \leq \frac{w_{(i)}}{\sum_{k=i}^m w_{(k)}}\alpha .
\]  
\end{remark}

We now formalize two notions of monotonicity.  

\begin{definition}[Monotone property; \citeauthor{dmitrienko2009multiple}, \citeyear{dmitrienko2009multiple}]\label{m1}
A $p$-value–based MTP is said to be $p$-value monotone if lowering one or more $p$-values never decreases the number of rejections.  
\end{definition}

\begin{definition}[Monotonicity condition; \citeauthor{hommel2007powerful}, \citeyear{hommel2007powerful}]\label{m2}
For a CTP, let $\alpha_i(I)$, $i \in I \subseteq M=\{1,\ldots,m\}$, denote the intersection-specific critical values, which satisfy  
\[
\sum_{i \in I} \alpha_i(I) \leq \alpha.
\]
The procedure is said to satisfy the \emph{monotonicity condition} if  
\[
\alpha_i(I) \leq \alpha_i(J) \quad \text{for all } i \in J \subset I \subseteq M.
\]
\end{definition}

\begin{remark}
The quantities $\alpha_i(I)$, for $i \in I \subseteq M=\{1,\ldots,m\}$, in Definition~\ref{m2} denote intersection-specific critical values assigned to the component hypotheses within the intersection, where $I$ is the index set defining the intersection. These quantities arise from the closed testing procedure and depend on the particular intersection hypothesis $H_I$.

In contrast, the quantities $\alpha_i$, for $i \in I \subseteq M=\{1,\ldots,m\}$, introduced in Section~\ref{graphs}, denote local significance levels assigned to individual hypotheses in the initial graph (Step 1) and in the updated graphs (Step 4) of the graphical algorithm. In this context, $I$ represents the set of remaining (unrejected) hypotheses at the corresponding stage of the procedure.
\end{remark}

\smallskip
The WHP is $p$-value monotone, and its CTP representation satisfies the monotonicity condition, making it consonant. By contrast, WAP is neither $p$-value monotone nor monotonic in its CTP representation. Nevertheless, its closed testing representation remains consonant—a somewhat surprising property.

\begin{proposition}\label{wapconsonant}
The CTP corresponding to WAP is consonant, despite WAP itself not being monotone under Definitions~\ref{m1} or~\ref{m2}.  
\end{proposition}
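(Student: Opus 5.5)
We need to prove two things: that the CTP with local tests \eqref{waplocal} is consonant, and that WAP fails Definitions~\ref{m1} and~\ref{m2}. Consonance here means: whenever an intersection $H_I$ with $|I|\ge 2$ is rejected by its local test, some $i\in I$ exists such that every $H_J$ with $i\in J\subseteq I$ is also rejected by its local test.

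For consonance, the natural witness is the index of the smallest $p$-value. Suppose $H_I$ is rejected, and let $i^*$ be the index attaining $P^I_{(1)}$, so that
\[
P_{i^*}\le \frac{w_{i^*}}{\sum_{k\in I} w_k}\,\alpha .
\]
Take any $J$ with $i^*\in J\subseteq I$. Then $i^*$ still carries the smallest $p$-value in $J$, so $P^J_{(1)}=P_{i^*}$ and $w^J_{(1)}=w_{i^*}$. Since weights are positive, $\sum_{k\in J}w_k\le\sum_{k\in I}w_k$, and hence
\[
P_{i^*}\le \frac{w_{i^*}}{\sum_{k\in J} w_k}\,\alpha .
\]
So $H_J$ is rejected as well. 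Taking $J=\{i^*\}$ in particular, every intersection containing $i^*$ inside $I$ is rejected. This is exactly consonance.

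Ties in the minimum need a fixed tie-breaking rule, for example the smallest index. If $i^*$ is the minimizer in $I$ under that rule, it remains the minimizer in every subset $J$ containing it, because the rule is inherited by subsets. This handling of ties is the only delicate point in the consonance argument.

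For non-monotonicity, a small explicit counterexample with $m=2$ suffices for both definitions.

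\begin{itemize}
\item Definition~\ref{m2}: with $I=\{1,2\}$, the implied critical value $\alpha_i(I)$ equals $\frac{w_i}{w_1+w_2}\alpha$ only when $P_i$ is the minimum in $I$, and $0$ otherwise. So $\alpha_i(I)$ depends on the data. More concretely, index $2$ receives $\alpha_2(I)=0$ whenever $P_1<P_2$, while the singleton gives $\alpha_2(\{2\})=\alpha$. What must be shown is that for fixed $I$ the WAP local test cannot be written with fixed $\alpha_i(I)$ satisfying the monotonicity condition.
\item Definition~\ref{m1}: use weights $w_1=0.9$ and $w_2=0.1$ with $\alpha=0.05$. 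Take $P_1=0.04$ and $P_2=0.045$. Here $P_{(1)}=0.04\le0.9\alpha=0.045$, so $H_1$ is rejected; then $P_2=0.045\le\alpha$, so $H_2$ is rejected, giving two rejections. Now lower $P_2$ to $0.03$. The smallest $p$-value now belongs to $H_2$, and $0.03>0.1\alpha=0.005$, so WAP stops with zero rejections. Lowering a $p$-value decreased the number of rejections, which violates Definition~\ref{m1}.

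\end{itemize}

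The main obstacle is framing the Definition~\ref{m2} failure rigorously. I would argue that any fixed-level representation reproducing the rejection region of \eqref{waplocal} would have to reject $H_{\{1,2\}}$ at the point $(P_1,P_2)=(0.03,0.03)$, which WAP does reject. Combined with the first counterexample, this leads to a contradiction. Alternatively, I would simply note that the local levels are data-dependent, so the monotonicity condition fails as stated.
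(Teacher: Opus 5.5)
Your consonance argument is correct and uses the same witness as the paper: the index $i^*$ of the smallest $p$-value in $I$. It does, however, verify a different formulation of consonance.

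\textbf{Consonance.} You prove the local, subset-wise property used by Hommel et al.\ for shortcuts: every $H_J$ with $i^*\in J\subseteq I$ is rejected by its local test. The paper proves the closure-level statement that $H_{i_0}$ itself is rejected by the CTP whenever $H_I$ is. That statement also needs local rejection of every $H_J$ with $i_0\in J\not\subseteq I$. The paper gets this from the local test of $S=I\cup J$, which is rejected because $S\supseteq I$. It then uses that the minimizer of $S$ is the minimizer of $J$ and that $\sum_{k\in S}w_k\ge\sum_{k\in J}w_k$.

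Your route is shorter. To reach the paper's conclusion, add either that superset step or the general fact that the local property implies the closure-level one. The general fact follows quickly: suppose each $i\in I$ had some $J_i\ni i$ with $\phi_{J_i}=0$, and set $S=\bigcup_{i\in I}J_i$. Then $\phi_S=1$ because $S\supseteq I$, and the local witness for $S$ lies in some $J_i\subseteq S$, forcing $\phi_{J_i}=1$, a contradiction. Your explicit tie-breaking rule also covers a point the paper skips.

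\textbf{Non-monotonicity.} The paper asserts this clause but does not prove it. Your Definition~\ref{m1} counterexample is correct and adds something the paper lacks.

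Your Definition~\ref{m2} discussion does not work as written.
\begin{itemize}
\item The data-dependent levels you describe never violate the inequality. Clearly $0\le\alpha$. Moreover, if $P_i$ is the minimum in $I$, it is the minimum in every $J\subseteq I$ containing $i$, so $\frac{w_i}{\sum_{k\in I}w_k}\alpha\le\frac{w_i}{\sum_{k\in J}w_k}\alpha$. This is exactly the mechanism behind consonance, so ``data-dependent, hence fails'' proves nothing.
\item The point $(0.03,0.03)$ combined with non-rejection at $(0.04,0.03)$ only forces $0.03\le a_1<0.04$ and $a_2<0.03$. That is consistent, so no contradiction follows.
\end{itemize}

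Use instead the two configurations you already have. The local test of $H_{\{1,2\}}$ rejects at $(P_1,P_2)=(0.04,0.045)$ but not at $(0.04,0.03)$. Any region $\{P_1\le a_1\}\cup\{P_2\le a_2\}$ with fixed $a_1,a_2$ that contains the first point also contains the second. Hence no fixed intersection-specific levels $\alpha_i(I)$ reproduce WAP's CTP, which is the sense in which WAP falls outside Definition~\ref{m2}.
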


\subsection{Illustrative examples}

The following examples demonstrate when WHP and WAP coincide and when WHP outperforms WAP.  

\begin{example}\label{ex1} \rm
Consider testing three hypotheses $H_1, H_2$, and $H_3$ with weights $w_1=1$, $w_2=2$, and $w_3=3$. Suppose the raw $p$-values are $p_1=0.01$, $p_2=0.03$, and $p_3=0.09$, yielding weighted $p$-values $\tilde{p}_1=0.01$, $\tilde{p}_2=0.015$, and $\tilde{p}_3=0.03$. Since the ordering of the raw and weighted $p$-values is identical, WHP and WAP follow the same testing sequence and yield identical results.
\end{example}

\begin{example}\label{ex2} \rm
Consider the same hypotheses and weights as in the previous example with $\alpha=0.05$. Let $p_1=0.01$, $p_2=0.014$, and $p_3=0.3$, yielding weighted $p$-values $\tilde{p}_1=0.01$, $\tilde{p}_2=0.007$, and $\tilde{p}_3=0.1$. In this case, the ordering of raw and weighted $p$-values differs.  

WAP compares $p_1, p_2, p_3$ against thresholds $\alpha/6=0.0083$, $2\alpha/5=0.02$, and $\alpha=0.05$, rejecting none. In contrast, WHP compares $\tilde{p}_2, \tilde{p}_1, \tilde{p}_3$ against thresholds $\alpha/6=0.0083$, $\alpha/4=0.0125$, and $\alpha/3=0.017$, and rejects $H_1$ and $H_2$. Hence, WHP yields more rejections in this setting.
\end{example}

\section{Graphical representations of WHP and WAP}\label{graphs}

\noindent In the graphical approach \citep{bretz2009graphical}, each hypothesis is represented by a node, and the global significance level $\alpha$ is initially distributed across the nodes. The relationships among hypotheses are encoded through transition coefficients, which specify how local significance levels are reallocated when a hypothesis is rejected. Formally, let the initial local levels be $\boldsymbol{\alpha} = (\alpha_1,\ldots,\alpha_m)$ with $\sum_{i=1}^m \alpha_i \leq \alpha \in (0,1)$. Define $M = \{1,\ldots,m\}$ and $G = (g_{ij})_{m \times m}$, where $g_{ij}$ is the fraction of $\alpha_i$ transferred to $H_j$ if $H_i$ is rejected, subject to $0 \leq g_{ij} \leq 1$, $g_{ii}=0$, and $\sum_{j=1}^m g_{ij} \leq 1$ for all $i$. The pair $(G, \boldsymbol{\alpha})$, together with the updating algorithm, defines a MTP.  

\smallskip
A convenient choice of the initial transition matrix is
\[
\boldsymbol{G} =
\begin{bmatrix}
0 & \tfrac{w_2}{\sum_{j=2}^m w_j} & \cdots & \tfrac{w_m}{\sum_{j=2}^m w_j} \\
\tfrac{w_1}{\sum_{j\neq 2}^m w_j} & 0 & \cdots & \tfrac{w_m}{\sum_{j\neq 2}^m w_j} \\
\vdots & \vdots & \ddots & \vdots \\
\tfrac{w_1}{\sum_{j=1}^{m-1} w_j} & \tfrac{w_2}{\sum_{j=1}^{m-1} w_j} & \cdots & 0
\end{bmatrix},
\]
with initial local levels $\alpha_i = \tfrac{w_i}{\sum_{j=1}^m w_j}\alpha$ for $i=1,\ldots,m$, which guarantees $\sum_{j=1}^m g_{ij} = 1$ for all $i$ and $\sum_{i=1}^m \alpha_i = \alpha$. Here, the initial local significance levels $\alpha_i, i=1, \ldots, m$, are specified for the graphical approaches underlying both the WHP and WAP procedures, which are described below.

\begin{tcolorbox}[title=Graphical Algorithms, colback=white, colframe=black!50]
\begin{minipage}[t]{0.48\linewidth}
\subsubsection*{WHP}
\begin{enumerate}
  \item Initialize $I = M$.
  \item Let $j = \arg\min_{i \in I} \tilde{p}_i$.
  \item If $p_j \leq \alpha_j$, reject $H_j$; otherwise stop.
  \item Update:
  \[
    I \gets I \setminus \{j\}, \quad 
    \alpha_l \gets \alpha_l + \alpha_j g_{jl} \;(l \in I),
  \]
  \[
    g_{lk} \gets \frac{g_{lk} + g_{lj} g_{jk}}{1 - g_{lj} g_{jl}} \;(l,k \in I, l \ne k).
  \]
  \item If $I \neq \varnothing$, return to Step~2; otherwise stop.
\end{enumerate}
\end{minipage}\hfill
\begin{minipage}[t]{0.48\linewidth}
\subsubsection*{WAP}
\begin{enumerate}
  \item Initialize $I = M$.
  \item Let $j = \arg\min_{i \in I} p_i$.
  \item If $p_j \leq \alpha_j$, reject $H_j$; otherwise stop.
  \item Update the graph as in Step~4 of WHP.
  \item If $I \neq \varnothing$, return to Step~2; otherwise stop.
\end{enumerate}
\end{minipage}
\end{tcolorbox}

\smallskip
\noindent The graphical representations of WHP and WAP are displayed side by side above.  
Their \textbf{essential distinction} lies in that WHP operates on weighted $p$-values $(\tilde{p}_i)$, whereas WAP is based on the raw $p$-values $(p_i)$ (see Step 2 of the graphical algorithms for both procedures). 

\begin{proposition}\label{eqgraphs}
The WHP ( \citeauthor{holm1979simple}, \citeyear{holm1979simple})  
and the WAP (\citeauthor{benjamini1997multiple}, \citeyear{benjamini1997multiple})  
are equivalent to the graphical algorithms described above,  
in the sense that they yield identical rejection decisions for any configuration of $p$-values.  
\end{proposition}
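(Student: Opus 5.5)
The key step is an invariant of the graph update: whenever the current set of unrejected hypotheses is $I$, the local levels and transition weights produced by Step~4 are
\[
\alpha_l = \frac{w_l}{\sum_{k\in I} w_k}\,\alpha,\qquad g_{lk} = \frac{w_k}{\sum_{j\in I\setminus\{l\}} w_j}\quad (l,k\in I,\ l\neq k).
\]
I will prove this by induction on the number of rejections; it does not depend on the order in which hypotheses were removed. Once it holds, both algorithms reduce at each step to comparing a chosen hypothesis with its weighted Bonferroni share of $\alpha$ over the remaining set. This is exactly what Procedures~\ref{whp} and~\ref{wap} do.

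\emph{Base case.} For $I=M$ the formulas are the initial choices of $\boldsymbol{G}$ and $\boldsymbol{\alpha}$.

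\emph{Inductive step.} Suppose the formulas hold for $I$ and $H_j$ is removed; write $S=\sum_{k\in I}w_k$ and $I'=I\setminus\{j\}$.
\begin{itemize}
\item For the levels, $\alpha_l+\alpha_j g_{jl} = \frac{w_l\alpha}{S}+\frac{w_j\alpha}{S}\cdot\frac{w_l}{S-w_j} = \frac{w_l\alpha}{S-w_j}$. This is the claimed form for $I'$.
\item For the transition weights, put $S_l=S-w_l$. Then $g_{lk}=w_k/S_l$, $g_{lj}=w_j/S_l$, $g_{jk}=w_k/(S-w_j)$ and $g_{jl}=w_l/(S-w_j)$. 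The numerator of the update is $\frac{w_k}{S_l}\bigl(1+\frac{w_j}{S-w_j}\bigr)=\frac{w_kS}{S_l(S-w_j)}$. The denominator is $1-\frac{w_jw_l}{S_l(S-w_j)}=\frac{S_l(S-w_j)-w_jw_l}{S_l(S-w_j)}$.
\end{itemize}
The main obstacle is the algebraic identity $S_l(S-w_j)-w_jw_l=S(S-w_l-w_j)$. Expanding $(S-w_l)(S-w_j)-w_jw_l = S^2-Sw_l-Sw_j$ confirms it. The ratio is therefore $w_k/(S-w_l-w_j)=w_k/\sum_{i\in I'\setminus\{l\}}w_i$, as claimed. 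The degenerate case $|I'|=1$, where the transition weights are vacuous, needs a brief separate remark.

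\emph{WHP.} Given the invariant, at a stage with remaining set $I$ the algorithm selects $j=\arg\min_{i\in I}\tilde p_i$. Because hypotheses are removed in increasing order of $\tilde p$, after $r-1$ rejections $I=\{H^*_{(r)},\ldots,H^*_{(m)}\}$ and $j$ is $H^*_{(r)}$. The test $p_j\le w_j\alpha/\sum_{k\in I}w_k$ is equivalent to $\tilde P_{(r)}\le \alpha/\sum_{k=r}^m w^*_{(k)}$. The algorithm stops at the first failure, so it rejects exactly the hypotheses rejected by Procedure~\ref{whp}.

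\emph{WAP.} The same argument with raw ordering gives that $j$ is $H_{(r)}$ and that the test is $P_{(r)}\le w_{(r)}\alpha/\sum_{k=r}^m w_{(k)}$, which is Procedure~\ref{wap}.

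Both equivalences rest only on the order-independence of the invariant. I expect no further difficulty beyond the identity above and careful handling of ties in the $\arg\min$, which I resolve by the same tie-breaking convention used to define the orderings.
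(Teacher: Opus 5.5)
Your proposal is correct and follows the same route as the paper's proof: an induction showing that the Step~4 updates preserve the weighted Bonferroni form of the local levels and transition coefficients on the remaining set, regardless of removal order. The WHP and WAP decision rules are then read off from the respective orderings, and your explicit identity $(S-w_l)(S-w_j)-w_jw_l=S(S-w_l-w_j)$, the remark on the degenerate case $|I'|=1$, and the tie-breaking convention (which genuinely matters for WAP) supply details the paper leaves under ``simplifying.''
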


\smallskip
Graphical formulations of WHP have previously been considered in \citet{alosh2014advanced} and \citet{guilbaud2018simultaneous}, though only for a small number of hypotheses (e.g., $m=3$). Here, we extend this representation to arbitrary $m$. In addition, we introduce a graphical representation for WAP based on the raw $p$-values, which, to the best of our knowledge, has not been previously developed in the literature. Presenting these representations within a common graphical setup enables a direct comparison between WHP and WAP, thereby clarifying their structural differences and connections.

\begin{remark}\rm
The graphical approach underlying WAP lies beyond the original framework of \citet{bretz2009graphical},  
because its local tests are not weighted Bonferroni tests and thus need not satisfy the usual monotonicity conditions.  
This highlights the need for a broader graphical framework capable of encompassing WAP and related procedures.  
\end{remark}

\begin{example}\label{ex3}\rm
Consider the same setting as in Example~\ref{ex2}.  
Figures~\ref{fig:wapex3} and \ref{fig:whpex3} display the graphical representations of WAP and WHP, respectively.  
The conclusions mirror those of Example~\ref{ex2}: WAP rejects no hypotheses, whereas WHP rejects $H_1$ and $H_2$.  
\end{example}

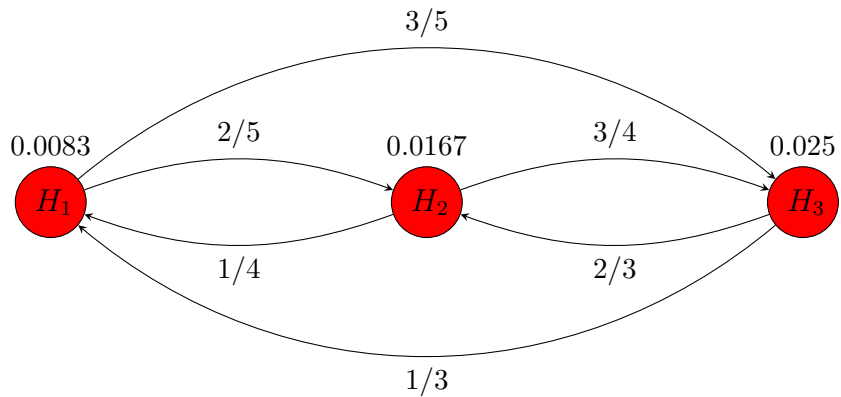
\begin{figure}[H]
\centering
\begin{tikzpicture}[node distance=5.0cm, auto, >=stealth]
\node[block, fill=red, label={\small $0.0167$}] (b) {$H_2$};
\node[block,fill=red, label={\small $0.0083$}] (a) [left of=b, node distance=5cm] {$H_1$};
\node[block, fill=red, label={\small $0.025$}] (c) [right of=b, node distance=5cm] {$H_3$};
\draw[->] (a) to [out=20,in=160] node[above] {\small $2/5$} (b);
\draw[->] (b) to [out=-160,in=-20] node[below] {\small $1/4$} (a);
\draw[->] (b) to [out=20,in=160] node[above] {\small $3/4$} (c);
\draw[->] (c) to [out=-160,in=-20] node[below] {\small $2/3$} (b);
\draw[->] (a) to [out=40,in=140] node[above] {\small $3/5$} (c);
\draw[->] (c) to [out=-140,in=-40] node[below] {\small $1/3$} (a);
\end{tikzpicture}
\caption{Graphical representation of WAP with $\alpha=0.05$, weights $w_i=i$ for $i=1,2,3$, and initial allocation $\boldsymbol{\alpha} = (\alpha/6, \alpha/3, \alpha/2)$. 
Given the raw $p$-values $p_1=0.01$, $p_2=0.014$, and $p_3=0.3$, no hypotheses are rejected under WAP. 
Rejected nodes are shown in yellow, and nodes that remain non-rejected at the final stage are shown in red.}
\label{fig:wapex3}
\end{figure}

\begin{figure}[H]
\centering
\includegraphics[width=0.95\linewidth]{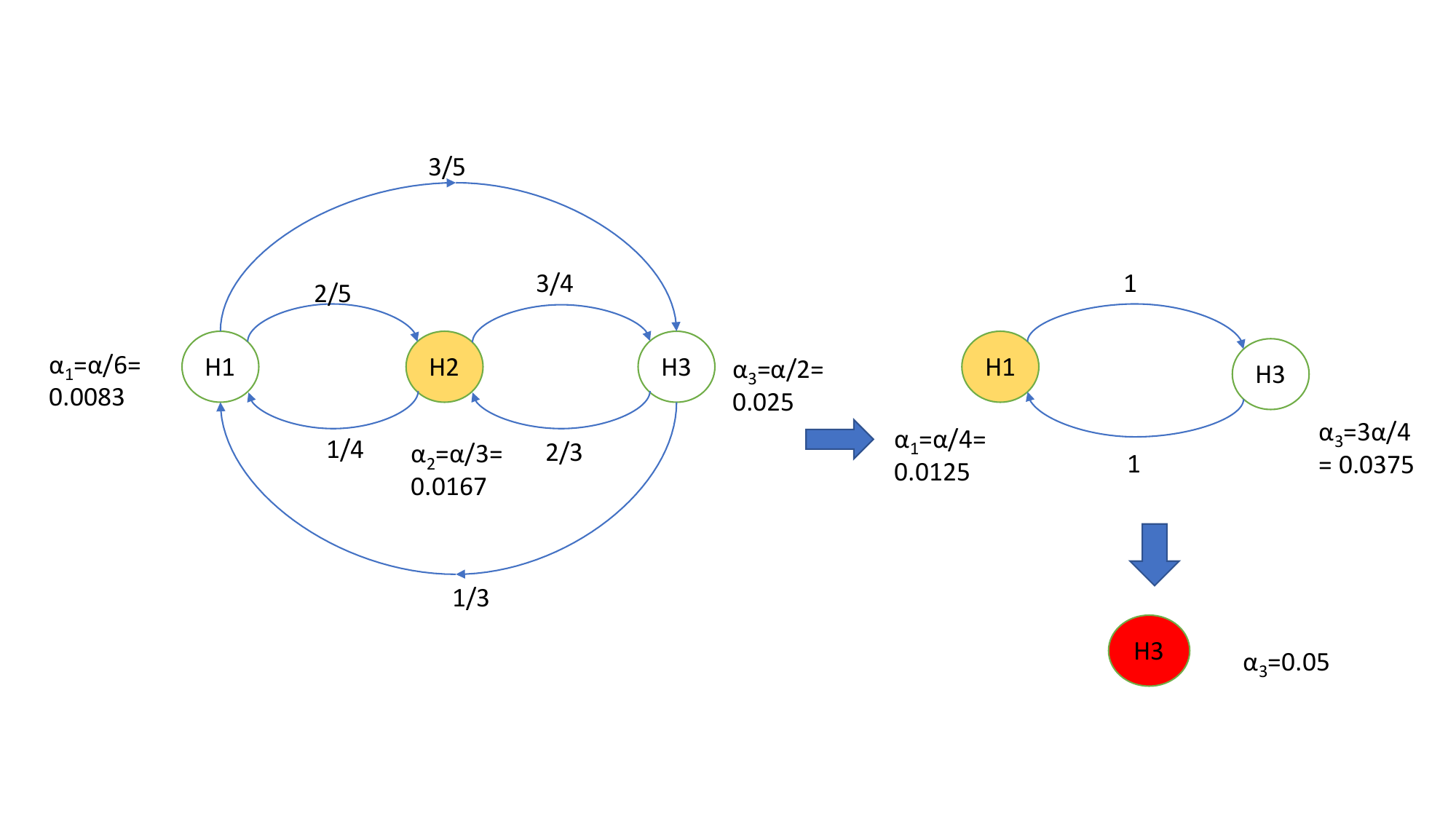} \vskip -30pt
\caption{Graphical representation of WHP under the same setting as Figure~\ref{fig:wapex3}. Given the raw $p$-values $p_1=0.01$, $p_2=0.014$, and $p_3=0.3$, the corresponding weighted $p$-values are $\tilde{p}_1=0.01$, $\tilde{p}_2=0.007$, and $\tilde{p}_3=0.1$. Under WHP, hypotheses $H_1$ and $H_2$ are rejected. Rejected nodes are shown in yellow, and nodes that remain non-rejected at the final stage are shown in red.}
\label{fig:whpex3}
\end{figure}

\section{Adjusted $p$-values}\label{adjs}

Adjusted $p$-values provide an alternative implementation of MTPs.  
Instead of comparing raw or weighted $p$-values to stepwise thresholds, one can compare adjusted $p$-values directly with the global significance level $\alpha$.  
This leads to decisions identical to those of the original procedures.  
In this section, we derive the adjusted weighted $p$-values for WHP and the adjusted $p$-values for WAP.  

\smallskip
For WHP, the adjusted weighted $p$-values $\tilde{P}_{(i)}^{\,adj}$ corresponding to the ordered hypotheses $H_{(i)}^*$ are
\[
\tilde{P}_{(i)}^{\,adj} = 
\begin{cases}
  \min\!\left\{\tilde{P}_{(1)} \sum_{k=1}^{m} w_{(k)}^*, \, 1\right\}, & i=1, \\[6pt]
  \max\!\left\{\tilde{P}_{(i)} \sum_{k=i}^{m} w_{(k)}^*, \, \tilde{P}_{(i-1)}^{\,adj}\right\}, & i=2,\ldots,m. 
\end{cases}
\]

\smallskip
For WAP, the adjusted $p$-values $P_{(i)}^{\,adj}$ corresponding to the ordered hypotheses $H_{(i)}$ are
\[
P_{(i)}^{\,adj} = 
\begin{cases}
  \min\!\left\{\dfrac{P_{(1)}}{w_{(1)}} \sum_{k=1}^{m} w_{(k)}, \, 1\right\}, & i=1, \\[6pt]
  \max\!\left\{\dfrac{P_{(i)}}{w_{(i)}} \sum_{k=i}^{m} w_{(k)}, \, P_{(i-1)}^{\,adj}\right\}, & i=2,\ldots,m. 
\end{cases}
\]

\begin{proposition}\label{adjprop}
For every hypothesis $H_i$, $i=1,\ldots,m$, the adjusted weighted $p$-value under WHP is less than or equal to the corresponding adjusted $p$-value under WAP.  
Consequently, any hypothesis rejected by WAP will also be rejected by WHP.  
\end{proposition}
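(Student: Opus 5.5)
Write $W=\sum_{k=1}^m w_k$ and $Q_i=P_i/w_i$. The plan is to replace both recursive definitions by closed-form min--max expressions and then compare those expressions term by term.

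\textbf{Step 1 (closed forms).} For a hypothesis $H_i$, let $A_i=\{l:\tilde P_l\le \tilde P_i\}$, with ties broken by the WHP ordering. Unrolling the recursion gives
\[
\tilde P_i^{\,adj}=\min\Bigl\{1,\ \max_{l\in A_i} Q_l \sum_{k\notin A_l^{-}} w_k\Bigr\},
\]
where $A_l^{-}=A_l\setminus\{l\}$, so $\sum_{k\notin A_l^-}w_k$ is the total weight of the hypotheses not yet rejected when $H_l$ is tested. Let $B_i=\{l:P_l\le P_i\}$ and $B_l^-=B_l\setminus\{l\}$. The same unrolling for WAP gives
\[
P_i^{\,adj}=\min\Bigl\{1,\ \max_{l\in B_i} Q_l \sum_{k\notin B_l^{-}} w_k\Bigr\}.
\]
(The cap at $1$ is applied once at the first step and then propagated by the running maximum; this is harmless and can be applied at the end.)

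\textbf{Step 2 (comparison).} It suffices to show that for every $l\in A_i$ there is some $l'\in B_i$ with
\[
Q_l\sum_{k\notin A_l^-}w_k\ \le\ Q_{l'}\sum_{k\notin B_{l'}^-}w_k .
\]
Take $S=\{k: \tilde P_k\ge \tilde P_l\}$, the hypotheses not ahead of $H_l$ in the WHP ordering. Since $l\in A_i$, $S\ni l$ and also $i\in S$. Let $l'$ be the element of $S$ with the smallest raw $p$-value, so $P_{l'}\le P_i$ and hence $l'\in B_i$. Every $k$ with $P_k\ge P_{l'}$ either lies in $S$ or has $P_k\ge P_{l'}$ with $k\notin S$. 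The clean way to bound this is through the local tests of Section~3. The WHP term equals the smallest $\alpha$ at which the weighted Bonferroni test of $H_S$ rejects, namely $\min_{k\in S}Q_k\sum_{S}w = Q_l\sum_S w$. The WAP local test of $H_S$ rejects only at $\alpha\ge Q_{l'}\sum_S w$, and this is at least $Q_l\sum_S w$. So the cleaner route is the closure representation. By Propositions~\ref{eqwap} and~\ref{eqwhp}, and by consonance (Proposition~\ref{wapconsonant}), the adjusted $p$-value of $H_i$ under either CTP equals $\max_{I\ni i} p_I$, where $p_I$ is the local-test $p$-value. These are $p_I^{\rm WHP}=\min\{1,\min_{k\in I}Q_k\sum_I w\}$ and $p_I^{\rm WAP}=\min\{1,Q_{(1)}^I\sum_I w\}$.

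\textbf{Step 3 (conclusion).} Since $\min_{k\in I}Q_k\le Q^I_{(1)}$ for every $I$, we get $p_I^{\rm WHP}\le p_I^{\rm WAP}$ for every $I$. Taking the maximum over $I\ni i$ gives $\tilde P_i^{\,adj}\le P_i^{\,adj}$. Rejection occurs exactly when the adjusted value is at most $\alpha$, so any hypothesis rejected by WAP is also rejected by WHP, consistent with Theorem~\ref{thm1}.

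\textbf{Main obstacle.} The hard step is verifying that the displayed stepwise formulas coincide with the closure adjusted $p$-values $\max_{I\ni i}p_I$. For WHP this is the standard argument: the worst $I$ containing $H_i$ is a tail set of the weighted ordering. For WAP it requires showing that the maximizing $I$ can be taken among the sets $\{k:P_k\ge P_{(j)}\}$ for $j$ up to the rank of $H_i$. I would derive this directly from the shortcut argument behind Proposition~\ref{eqwap}, using the fact that $Q^I_{(1)}\sum_I w$ depends only on the smallest-$p$ element of $I$ and increases as $I$ grows. For any $I\ni i$ with smallest element $H_{(j)}$, this gives $p_I\le p_{\{k:P_k\ge P_{(j)}\}}$.
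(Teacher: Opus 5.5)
Your final argument is correct, but it is not the paper's route. The paper works directly with the stepwise formulas. For the hypothesis of weighted rank $r$ and raw rank $i$, it takes $s_r$, the raw rank of the raw-smallest member of the WHP tail set $\{H^*_{(r)},\ldots,H^*_{(m)}\}$. It then combines $\tilde P_{(r)}\le P_{(s_r)}/w_{(s_r)}$, $\sum_{k\ge r}w^*_{(k)}\le\sum_{\ell\ge s_r}w_{(\ell)}$ and $s_r\le i$. That is exactly your abandoned Step~2, with $l'$ in the role of $s_r$. It closes in one line once the containment is taken in the right direction: every $k\in S$ has $P_k\ge P_{l'}$, so $Q_l\sum_S w\le Q_{l'}\sum_S w\le Q_{l'}\sum_{k\notin B^-_{l'}}w_k$. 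The sentence you wrote there is vacuous and should be replaced by this or deleted.

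Your closure route instead reduces everything to the pointwise inequality $\min_{k\in I}Q_k\le Q^I_{(1)}$ between local-test $p$-values, the same fact behind Theorem~\ref{thm1}. The running maximum and the cap then come for free from $\max_{I\ni i}p_I$. The paper's written proof, by contrast, treats $\tilde P^{adj}_{(r)}$ as the single term $\min\{\tilde P_{(r)}\sum_{k\ge r}w^*_{(k)},1\}$. It needs the easy patch of applying the $s_j$-bound to every $j\le r$, noting $s_j\le i$.

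The price of your route is identifying the stepwise formulas with $\max_{I\ni i}p_I$, and this is smaller than you suggest. Propositions~\ref{eqwap} and~\ref{eqwhp} hold at every $\alpha$. Stepwise WAP rejects $H_i$ at level $\alpha$ iff $\max_{l\in B_i}Q_l\sum_{k\notin B_l^-}w_k\le\alpha$, and similarly for WHP with $A_i$. So the two infima over $\alpha$ agree; your tail-set sketch is also fine. Consonance plays no role: $\max_{I\ni i}p_I$ is the closure adjusted $p$-value of any CTP whose local rejection regions are nested in $\alpha$.

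Finally, capping at $1$ only at the end is not literally equivalent to the displayed recursions, which cap only the first term. Under the literal reading the inequality can fail above $1$: with $w=(10,1,1)$ and $p=(0.9,0.6,0.95)$, $H_2$ gets $1.2$ under WHP but $1$ under WAP. So state $\min\{1,\max(\cdot)\}$ explicitly as the definition you prove the result for; this changes no decision at any $\alpha<1$.
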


Proposition~\ref{adjprop} confirms Theorem~\ref{thm1}: every rejection made by WAP is also made by WHP. Thus, WHP uniformly dominates WAP while maintaining strong FWER control, and in terms of adjusted $p$-values, it constitutes a uniformly more refined procedure.

\begin{example}\label{ex4}\rm
Consider the setting of Example~\ref{ex2}.  
From the formulas above we obtain
\[
P_1^{\,adj} = P_2^{\,adj} = 0.06, \qquad P_3^{\,adj} = 0.3,
\]
\[
\tilde{P}_1^{\,adj} = \tilde{P}_2^{\,adj} = 0.042, \qquad \tilde{P}_3^{\,adj} = 0.3.
\]
With $\alpha=0.05$, WHP rejects $H_1$ and $H_2$, while WAP rejects none.  
This outcome is consistent with Example~\ref{ex2} and further illustrates the stronger performance of WHP.  
\end{example}

\section{Optimality of the WHP and WAP}\label{optimality}

A central question in small-scale multiple testing is whether a given procedure is \emph{optimal},  
in the sense that it cannot be improved without compromising control of the FWER \citep{lehmann2005generalizations, gordon2008optimality, lehmann2011optimality}.  
To the best of our knowledge, no previous work has examined the optimality of WHP or WAP.  
In this section, we establish such results for both procedures.  
Throughout, we assume that the weights $w_i$, $i=1,\ldots,m$, are fixed in advance  
and common to all weighted $p$-value–based procedures under consideration.  

\subsection{Optimality of WHP}

We first show that WHP is optimal in a broad sense.  
By constructing suitable joint distributions for the $p$-values $P_1,\ldots,P_m$ under arbitrary dependence,  
one can verify that the FWER bound of $\alpha$ is sharp.  

\begin{theorem}\label{optwhp}
The WHP (\citeauthor{holm1979simple}, \citeyear{holm1979simple}) is optimal in the sense that no critical value can be increased without violating FWER control.  
\end{theorem}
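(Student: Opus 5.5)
Throughout, write the WHP critical value at step $j$ as $c_j(I)=\alpha/\sum_{k\in I}w_k$. Here $I$ is the index set of the $m-j+1$ hypotheses still unrejected when step $j$ is reached, so that $\sum_{k=j}^m w^*_{(k)}=\sum_{k\in I}w_k$. The plan is a Lehmann--Romano type sharpness argument. Suppose some critical value is enlarged: for a step $j$ and remaining set $I$ with $|I|=m-j+1$, the new procedure uses some $c'>c_j(I)$ in place of $c_j(I)$, all else unchanged. Replacing $c'$ by $\min\{c',1/\sum_{k\in I}w_k\}$ can only make rejection at step $j$ harder, so it suffices to treat the case
\[
c_j(I)<c'\le 1\Big/\textstyle\sum_{k\in I}w_k .
\]
I will exhibit a joint distribution of $(P_1,\ldots,P_m)$ with super-uniform true-null marginals under which the modified procedure has FWER $=c'\sum_{k\in I}w_k>\alpha$.

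The configuration is as follows. Declare $H_i$, $i\notin I$, false and set $P_i=0$ almost surely, or $P_i=\varepsilon$ for a tiny $\varepsilon>0$ if strictly positive $p$-values are preferred. Declare $H_i$, $i\in I$, true. Then at steps $1,\ldots,j-1$ the smallest weighted $p$-values are those of the false hypotheses, each equal to $0$ and hence below every threshold. So the procedure deterministically rejects exactly $\{H_i: i\notin I\}$ and arrives at step $j$ with remaining set exactly $I$, where the threshold $c'$ applies. This reduction also covers $j=1$, where it is just the sharpness of the weighted Bonferroni local test in \eqref{whplocal}.

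For the true nulls I will use a single $U\sim\mathrm{Unif}(0,1)$ and disjoint intervals $A_i\subset[0,1]$, $i\in I$, of lengths $|A_i|=w_ic'$; they fit because $c'\sum_I w_k\le1$. Write $A_i=[a_i,a_i+w_ic')$. On $\{U\in A_i\}$ set $P_i=U-a_i$, which is uniform on $[0,w_ic')$. On $\{U\notin A_i\}$ set $P_i=w_ic'+\phi_i(U)$, where $\phi_i$ is the measure-preserving monotone rearrangement of $[0,1]\setminus A_i$ onto $[0,1-w_ic')$. Then each $P_i$ is exactly $\mathrm{Unif}(0,1)$, so the super-uniform condition holds. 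Moreover $\tilde P_i\le c'$ holds if and only if $U\in A_i$, so the events $\{\tilde P_i\le c'\}$, $i\in I$, are disjoint.

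To finish, note that on $\{U\in\bigcup_{i\in I}A_i\}$ exactly one $i\in I$ has $\tilde P_i<c'$, and every other $k\in I$ has $\tilde P_k\ge c'$. Hence this $H_i$ is $\arg\min_{k\in I}\tilde P_k$ and meets the enlarged threshold at step $j$, so a true null is rejected. The FWER is therefore at least $\sum_{i\in I}w_ic'=c'\sum_{k\in I}w_k>c_j(I)\sum_{k\in I}w_k=\alpha$.

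The step needing most care is verifying the exact marginal uniformity while keeping the events disjoint; the rearrangement construction above handles this. A second point to check is that ``increasing a critical value'' is interpreted consistently with WHP's data-dependent thresholds: the remaining set $I$ determines the threshold. Forcing all hypotheses outside $I$ to be rejected first makes the argument apply to any single step and any remaining set. Ties among weighted $p$-values occur with probability zero and can be ignored.
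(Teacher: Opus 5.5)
Your proof is correct and follows essentially the paper's route: a least favorable configuration with the false-null $p$-values at zero, which forces the procedure to reach the remaining set $I$, combined with a joint law of the true-null $p$-values with exactly uniform marginals under which the events $\{\tilde P_i \le c\}$, $i\in I$, are disjoint with probabilities $w_i c$. Your single-$U$ interval coupling realizes the same mixture the paper uses, and by working directly with the enlarged value $c'\le 1/\sum_{k\in I}w_k$ you make explicit the conclusion $\mathrm{FWER}\ge c'\sum_{k\in I}w_k>\alpha$, which the paper leaves implicit after showing that WHP attains $\mathrm{FWER}=\alpha$.
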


\smallskip
A stronger form of optimality arises when we restrict attention to weighted $p$-value step-down procedures.  
Let $\alpha_1 \leq \cdots \leq \alpha_m$ be a nondecreasing sequence of critical values.  
A weighted $p$-value step-down procedure rejects $H_{(i)}^*$ if and only if
\[
\tilde{P}_{(j)} \leq \alpha_j, \quad j=1,\ldots,i,
\]
where $\tilde{P}_{(1)} \leq \cdots \leq \tilde{P}_{(m)}$ are the ordered weighted $p$-values $\tilde{P}_i = P_i / w_i$.  

\begin{definition}[Dominance]
Let $\mathcal{M}$ and $\mathcal{M}'$ be two weighted $p$-value–based procedures, with rejection sets $\mathcal{M}(\mathbf{\tilde{P}})$ and $\mathcal{M}'(\mathbf{\tilde{P}})$ for a vector of weighted $p$-values $\mathbf{\tilde{P}}$.  
We say that $\mathcal{M}'$ dominates $\mathcal{M}$, written $\mathcal{M}' \succeq \mathcal{M}$, if
\[
\mathcal{M}'(\mathbf{\tilde{P}}) \supseteq \mathcal{M}(\mathbf{\tilde{P}}) \quad \text{for all } \mathbf{\tilde{P}}.
\]
\end{definition}

\begin{proposition}\label{dominate}
Let $\mathcal{M}^w$ be any weighted $p$-value step-down procedure with $\mathrm{FWER} \leq \alpha$.  
Then $\mathcal{M}^w \preceq \mathrm{WHP}$.  
\end{proposition}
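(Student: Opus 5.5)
The plan is to reduce the claim to a pointwise comparison of critical values, and then prove that comparison by a worst-case construction of the kind used for Theorem~\ref{optwhp}.

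\textbf{Reduction.} For $j=1,\ldots,m$ let
\[
W_j=\max\Bigl\{\textstyle\sum_{i\in I} w_i : I\subseteq M,\ |I|=m-j+1\Bigr\}.
\]
This is the largest possible total weight of the $m-j+1$ hypotheses that remain at step $j$. The WHP threshold at step $j$ is $\alpha/\sum_{k=j}^m w^*_{(k)}$. This threshold depends on the data, but it always satisfies
\[
\frac{\alpha}{\sum_{k=j}^m w^*_{(k)}}\ \ge\ \frac{\alpha}{W_j}.
\]
So it suffices to show that every weighted $p$-value step-down procedure $\mathcal{M}^w$ with $\mathrm{FWER}\le\alpha$ has $\alpha_j\le \alpha/W_j$ for all $j$.

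Suppose this holds. Both $\mathcal{M}^w$ and WHP process the hypotheses in the same order $H^*_{(1)},H^*_{(2)},\ldots$, determined by the weighted $p$-values. At each step, the threshold of $\mathcal{M}^w$ is no larger than that of WHP. Hence whenever $\mathcal{M}^w$ passes steps $1,\ldots,i$, WHP does too. This gives $\mathcal{M}^w(\tilde{\mathbf P})\subseteq \mathrm{WHP}(\tilde{\mathbf P})$ for every $\tilde{\mathbf P}$, that is, $\mathcal{M}^w\preceq\mathrm{WHP}$.

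\textbf{Key bound by contradiction.} Suppose $\alpha_j>\alpha/W_j$ for some $j$. Let $I$ attain the maximum in $W_j$, and let the $j-1$ indices outside $I$ be false nulls with $P_i=0$, so that $\tilde P_i=0$. Set $c=\min(\alpha_j,1/W_j)$. Since $\alpha<1$, we have $c>\alpha/W_j$ and $cW_j\le 1$.

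Because $\sum_{i\in I} w_i c = cW_j\le 1$, we can take disjoint events $A_i$, $i\in I$, with $\Pr(A_i)=w_ic$, for example as disjoint subintervals of a single uniform variable. For $i\in I$ (all true nulls), define $P_i$ as follows:
\begin{itemize}
\item on $A_i$, $P_i$ is uniform on $(0,w_ic]$;
\item on $A_i^c$, $P_i$ is uniform on $(w_ic,1)$.
\end{itemize}
This makes each $P_i$ exactly uniform, so the super-uniform condition holds. Moreover, on $A_i^c$ we have $\tilde P_i> c$, while on $A_i$ we have $\tilde P_i\le c$.

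\textbf{Checking the FWER.} On each event $A_i$, the $j-1$ false nulls occupy the first $j-1$ ordered positions with $\tilde P=0\le\alpha_1\le\cdots\le\alpha_{j-1}$, so they are all rejected. The smallest remaining weighted $p$-value is $\tilde P_i\le c\le \alpha_j$, so the true null $H_i$ is rejected at step $j$. Therefore
\[
\mathrm{FWER}\ \ge\ \sum_{i\in I}\Pr(A_i)=cW_j>\alpha,
\]
which contradicts the assumption on $\mathcal{M}^w$.

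\textbf{Main obstacle.} The delicate part is the construction itself. It must do three things at once: keep the events $A_i$ disjoint, which is where the cap $c\le 1/W_j$ is used; keep each marginal exactly super-uniform; and ensure that on $A_i$ no other true null displaces $H_i$ from position $j$. Ties at $\tilde P=c$ are harmless because the step-$j$ comparison involves only the smallest remaining value. The nondecreasing property of the $\alpha_j$ is used only to ensure that the first $j-1$ steps pass.
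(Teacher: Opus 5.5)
Your proof is correct and follows the paper's argument. In both, the false nulls are set to zero and the true nulls are coupled so that exactly one $H_i$ falls below the cap $c=\min\{\alpha_j,1/(\text{total true-null weight})\}$ with probability $w_ic$; this keeps every marginal uniform and forces $\mathrm{FWER}\ge c$ times the total true-null weight. Your explicit reduction through $W_j$ (the largest total weight of $m-j+1$ hypotheses), together with the observation that WHP's data-dependent threshold $\alpha/\sum_{k=j}^m w^*_{(k)}$ is never below $\alpha/W_j$, makes rigorous a final step the paper only asserts, by placing the false nulls first ``without loss of generality'' and saying the bound ``coincides with'' the WHP threshold.
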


Proposition~\ref{dominate} demonstrates that, within the class of weighted $p$-value step-down procedures, WHP is uniformly most powerful.  

\subsection{Optimality of WAP}

We now turn to WAP.  
In contrast to WHP, WAP is not uniformly optimal; rather, it attains optimality only under specific restrictions on the weight ratios.  

\begin{proposition}\label{optwap}
The WAP (\citeauthor{benjamini1997multiple}, \citeyear{benjamini1997multiple}) is optimal in the sense that no critical value can be increased without violating FWER control, provided that
\[
\frac{\min_i w_i}{\max_i w_i} \,\geq\, \alpha.
\]
\end{proposition}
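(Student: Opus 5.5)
The plan is to mirror the sharpness argument used for Theorem~\ref{optwhp}. WAP is a step-down procedure whose critical value at step $j$ is $c_j = \frac{w_{(j)}}{\sum_{k=j}^m w_{(k)}}\alpha$. This value depends both on the set $I$ of hypotheses still in play and on which hypothesis $h\in I$ carries the $j$-th smallest raw $p$-value. So "increasing a critical value" means replacing, for one fixed pair $(I,h)$ with $|I| = m-j+1$, the threshold $a_h(I) := \frac{w_h}{\sum_{i\in I} w_i}\alpha$ by some $a_h'(I) > a_h(I)$, leaving all other thresholds unchanged. For each such modification I will exhibit a joint distribution of $(P_1,\dots,P_m)$, with super-uniform true-null marginals, under which the modified procedure has FWER $>\alpha$.

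\textbf{Step 1: reduce to a first-step problem on $I$.} Declare the hypotheses outside $I$ false and set their $p$-values to $0$. They are then the $j-1$ smallest raw $p$-values and pass every threshold, so WAP (original or modified) reaches step $j$ with remaining set $I$. Declare all hypotheses in $I$ true nulls. By Proposition~\ref{eqwap}, a type I error now occurs exactly when the local test \eqref{waplocal} rejects $H_I$, namely when $P_{(1)}^I \le a_{i^*}(I)$, where $i^*$ is the minimizing index.

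\textbf{Step 2: make the original WAP exactly sharp.} Let $U\sim\mathrm{Unif}(0,1)$. Partition $[0,\alpha)$ into disjoint intervals $A_i$, $i\in I$, with $|A_i| = a_i(I)$. On $\{U\in A_i\}$, set $P_i$ to be a uniform rescaling of $U$ onto $(0,a_i(I)]$. Choose the other $P_k$, $k\in I\setminus\{i\}$, so that each exceeds $P_i$ and none is at or below its own threshold. The events $\{U\in A_i\}$ are disjoint, so the FWER equals $\sum_{i\in I} a_i(I) = \alpha$. The remaining values of each $P_k$ must then be placed so that its marginal is exactly uniform.

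\textbf{Step 3: perturb at the threshold of $h$.} Take $\varepsilon \le a_h'(I) - a_h(I)$. Carve an additional event $B$ of probability $\varepsilon$ out of $\{U\ge\alpha\}$. On $B$, set $P_h \in (a_h(I), a_h(I)+\varepsilon]$ uniformly, and make every other $P_k$, $k\in I$, exceed $P_h$. Then $h$ is the minimizer, and the modified procedure rejects $H_h$, a true null. Since $B$ is disjoint from the $A_i$, the modified FWER is $\alpha+\varepsilon>\alpha$. If $h$ is an ordinary hypothesis rather than the minimum of $I$, the same construction applies, because $h$ is made the minimum on $B$.

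\textbf{Main obstacle: uniform marginals under the ordering constraints.} The hardest part is the bookkeeping of Steps 2--3: each $P_k$ must be exactly uniform while respecting the ordering constraints on all the events $A_i$ and $B$. The delicate range is $(a_k(I), \alpha]$. Each $P_k$ must take values there with total probability $\alpha - a_k(I)$, yet on the rejection events of the other hypotheses it must stay above their $p$-values and must not fall at or below its own threshold.

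My first attempt will place $P_k$'s mass on $(a_k(I),\alpha]$ on the events $A_i$, $i\neq k$, as values larger than $a_i(I)\ge P_i$. This requires $a_k(I)\ge a_i(I)$ in the worst case, which fails when weights are unequal. I will therefore fall back on placing that mass on the non-rejection region $\{U\ge\alpha\}$ together with the part of the $A_i$ lying above $a_i(I)$. Verifying that this region is large enough is where I expect the hypothesis $\min_i w_i/\max_i w_i \ge \alpha$ to enter. It gives $a_i(I)/a_k(I) = w_i/w_k \ge \alpha$, so the intervals needed to dominate a competitor's $p$-value fit inside what remains. I will check this inequality carefully, since it is the point where the weight-ratio condition is genuinely used and where optimality can fail without it.
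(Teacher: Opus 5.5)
Your reduction in Step~1 and your perturbation idea in Step~3 are sound. They are also more explicit than the paper, which simply infers from $\mathrm{FWER}=\alpha$ that no threshold can be raised. The gap is that the proposal stops exactly where the content lies. The coupling in Steps~2--3 is never constructed. Your plan for it, using $w_i/w_k\ge\alpha$ so that ``the intervals needed to dominate a competitor's $p$-value fit inside what remains,'' is not an argument and matches no actual constraint. The difficulty is self-imposed: on $A_i$ you only need $P_k>P_i$, not $P_k>a_i(I)$.

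Here is a construction that works. Write $a_i=a_i(I)$. For each $k\in I$, tile $(a_k,1]$ by consecutive intervals $(c_{k,i},c_{k,i}+a_i]$, $i\in I\setminus\{k\}$, together with one interval $N_k$ of length $1-\alpha$; the lengths add up to $1-a_k$. On $\{U\in A_i\}$ set $P_k=c_{k,i}+P_i$. On $\{U\ge\alpha\}$ let $P_k$ be a measure-preserving translate or reflection of $U$ onto $N_k$. Then:
\begin{itemize}
\item each $P_k$ is exactly uniform;
\item on $A_i$, $P_k=c_{k,i}+P_i>P_i$ because $c_{k,i}\ge a_k>0$, so $P_i$ is the minimum and $H_i$ is rejected;
\item on $\{U\ge\alpha\}$ every $P_k>a_k$, so nothing is rejected.
\end{itemize}
Hence $\mathrm{FWER}=\alpha$.

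For Step~3, choose the tilings as follows. For $h$, put $N_h=(a_h,a_h+1-\alpha]$ first, with $P_h=a_h+(U-\alpha)$ on $\{U\ge\alpha\}$. For $k\ne h$, put $N_k=(\alpha,1]$ last, with $P_k=1+\alpha-U$ there. Take $B=\{\alpha\le U<\alpha+\varepsilon\}$ with $\varepsilon$ small. On $B$ you get $P_h<a_h+\varepsilon\le a_h'(I)$ and $P_k>1-\varepsilon>P_h$. So the modified procedure additionally rejects the true null $H_h$ on $B$, and its $\mathrm{FWER}\ge\alpha+\varepsilon$.

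Notice that $\min_i w_i/\max_i w_i\ge\alpha$ is never used: completed this way, your route proves the conclusion for arbitrary positive weights. This is where it genuinely differs from the paper. The paper selects a single true null with probability $w_i/W_0$, where $W_0=\sum_{k\in I_0}w_k$. It draws that null's $p$-value from $\mathrm{Unif}(0,w_i/W_0)$ and the other true-null $p$-values independently from $\mathrm{Unif}(w_j/W_0,1)$. There the ratio condition is genuinely needed. It guarantees $w_i\alpha\le w_j$, so a selected $p$-value below its threshold $w_i\alpha/W_0$ falls below the support of every other true-null $p$-value and is the minimum. Your coupled construction avoids that constraint, at the price of a less symmetric description, and it supplies the explicit perturbation step the paper leaves implicit.
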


\begin{remark}\rm
The condition $\min_i w_i / \max_i w_i \geq \alpha$ is typically not restrictive in applications.  
For instance, when $\alpha = 0.05$, it merely requires that the largest weight is at most $20$ times the smallest weight.  
Such a balance is common in practice, since weights are usually assigned to reflect relative importance of hypotheses rather than to induce extreme disparities.  
If the condition is violated, however, the WAP may fail to retain its optimality.  
\end{remark}

\smallskip
\textit{In summary, WHP enjoys unconditional optimality, whereas WAP is only conditionally optimal, depending on the relative magnitudes of the weights.}

\section{Simulation Studies}\label{simul}

We conducted simulation studies to compare the performance of WHP and WAP in terms of average power and FWER control under dependence. The dependence structure was modeled using an equicorrelated covariance matrix $\Sigma$, where all off-diagonal entries were set to the correlation $\rho$ and diagonal entries to 1. 

In each simulation, we generated $n$ independent $m$-dimensional normal vectors with covariance matrix $\Sigma$. Each component followed 
\[
Z_i \sim N(\mu_i,1), \quad i=1,\ldots,m.
\]
For each hypothesis $H_i: \mu_i=0$ versus $H_i': \mu_i>0$, one-sided $p$-values were computed using the one-sample $t$-test.

\smallskip
The simulation parameters were specified as follows: significance level $\alpha=0.05$; number of hypotheses $m \in \{5,10,15\}$; proportion of true nulls $\pi_0 \in \{0.2,0.4,0.6,0.8\}$; correlation $\rho \in \{0,0.1,\ldots,0.9\}$; and sample size $n=15$. Among the $m$ hypotheses, $m\pi_0$ had $\mu_i=0$, while the remaining hypotheses were set to $\mu_i=0.7$. Each configuration was replicated 5{,}000 times. These settings were chosen to systematically examine the effects of dimensionality, sparsity, and correlation strength on the relative performance of the procedures.
 
\smallskip
Three FWER-controlling procedures were applied to the $m$ hypotheses: the conventional (unweighted) Holm procedure, and the two weighted Holm procedures, WHP and WAP. Four weight settings were considered, where weights corresponding to true nulls and false nulls were independently generated from two different uniform distributions:
\begin{enumerate}
\item $w_i \sim U(1,2)$ for true nulls and $w_i \sim U(6,10)$ for false nulls (highly informative weights).
\item $w_i \sim U(1,2)$ for true nulls and $w_i \sim U(2,10)$ for false nulls (moderately informative weights).
\item $w_i \sim U(1,2)$ for true nulls and $w_i \sim U(2,6)$ for false nulls (weakly informative weights).
\item $w_i \sim U(1,6)$ for both true and false nulls (no separation; non-informative weights).
\end{enumerate}

\smallskip
These four scenarios were designed to represent settings with highly informative, moderately informative, weakly informative, and non-informative weights, thereby allowing us to assess how the relative performance of WHP and WAP depends on the informativeness of the weight specification.

\begin{figure}[htbp]
    \centering
    \begin{subfigure}{0.49\linewidth}
        \centering
        \includegraphics[width=\linewidth]{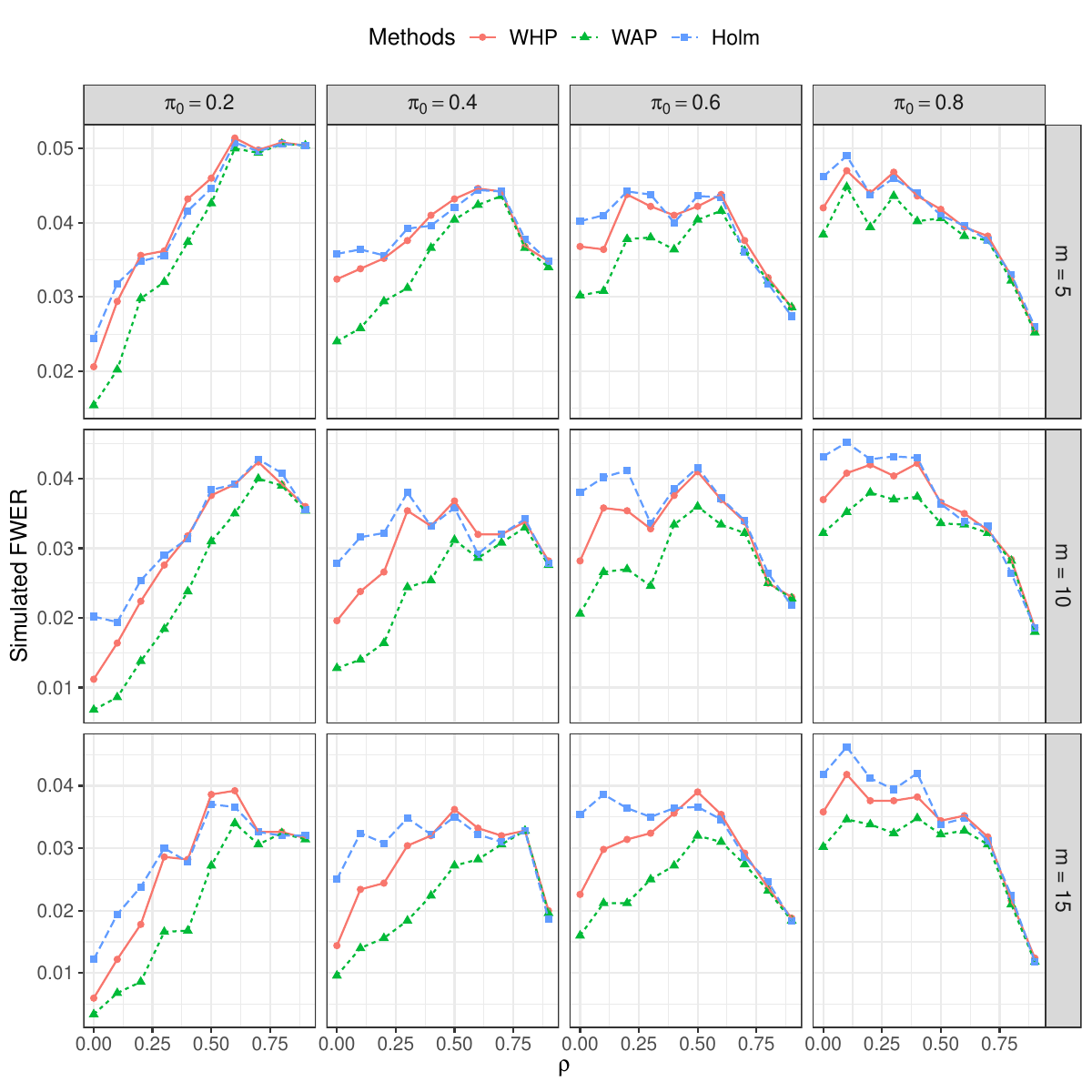}
        \caption{FWER: WHP vs.\ WAP.}
        \label{fig:fwer_12210}
    \end{subfigure}\hfill
    \begin{subfigure}{0.49\linewidth}
        \centering
        \includegraphics[width=\linewidth]{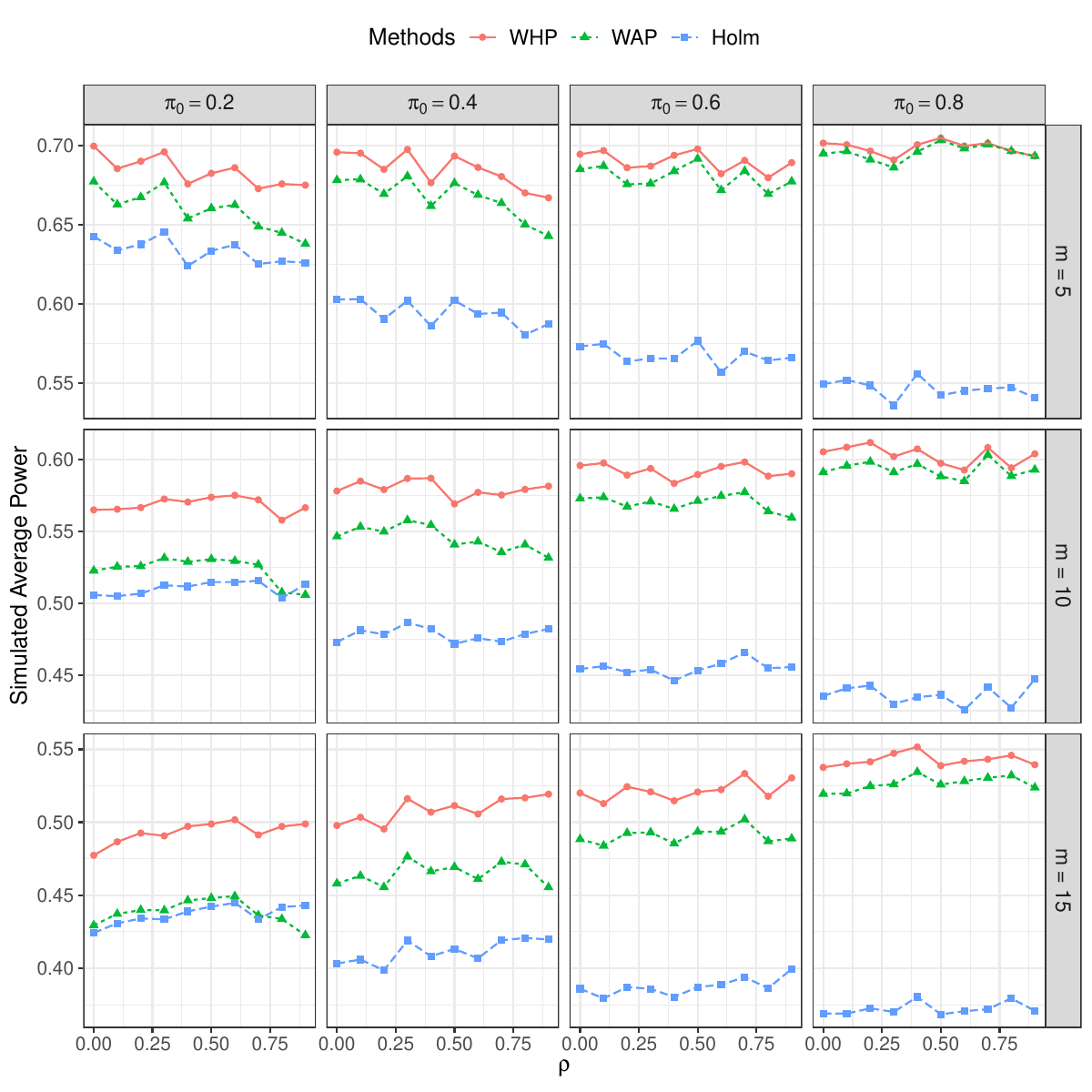}
        \caption{Average power: WHP vs.\ WAP.}
        \label{fig:pow_12210}
    \end{subfigure}
    \caption{Simulated performance under weight setting $U(1,2)$ for true nulls and $U(2,10)$ for false nulls, representing a moderately informative scenario.}
    \label{fig:sim_weights_210}
\end{figure}

\begin{figure}[htbp]
    \centering
    \begin{subfigure}{0.49\linewidth}
        \centering
        \includegraphics[width=\linewidth]{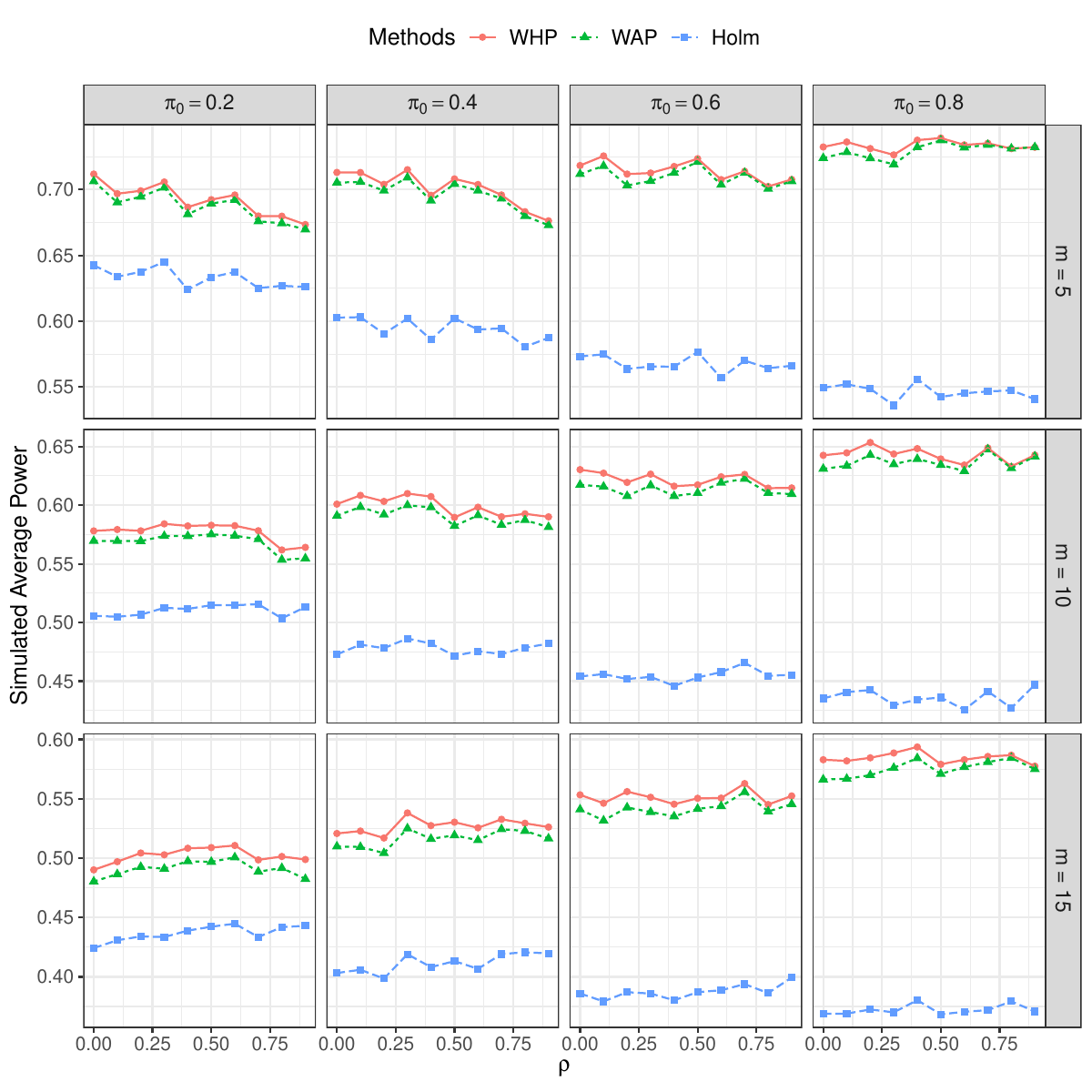}
        \caption{Power: $U(1,2)$ vs.\ $U(6,10)$.}
        \label{fig:pow_12610}
    \end{subfigure}\hfill
    \begin{subfigure}{0.49\linewidth}
        \centering
        \includegraphics[width=\linewidth]{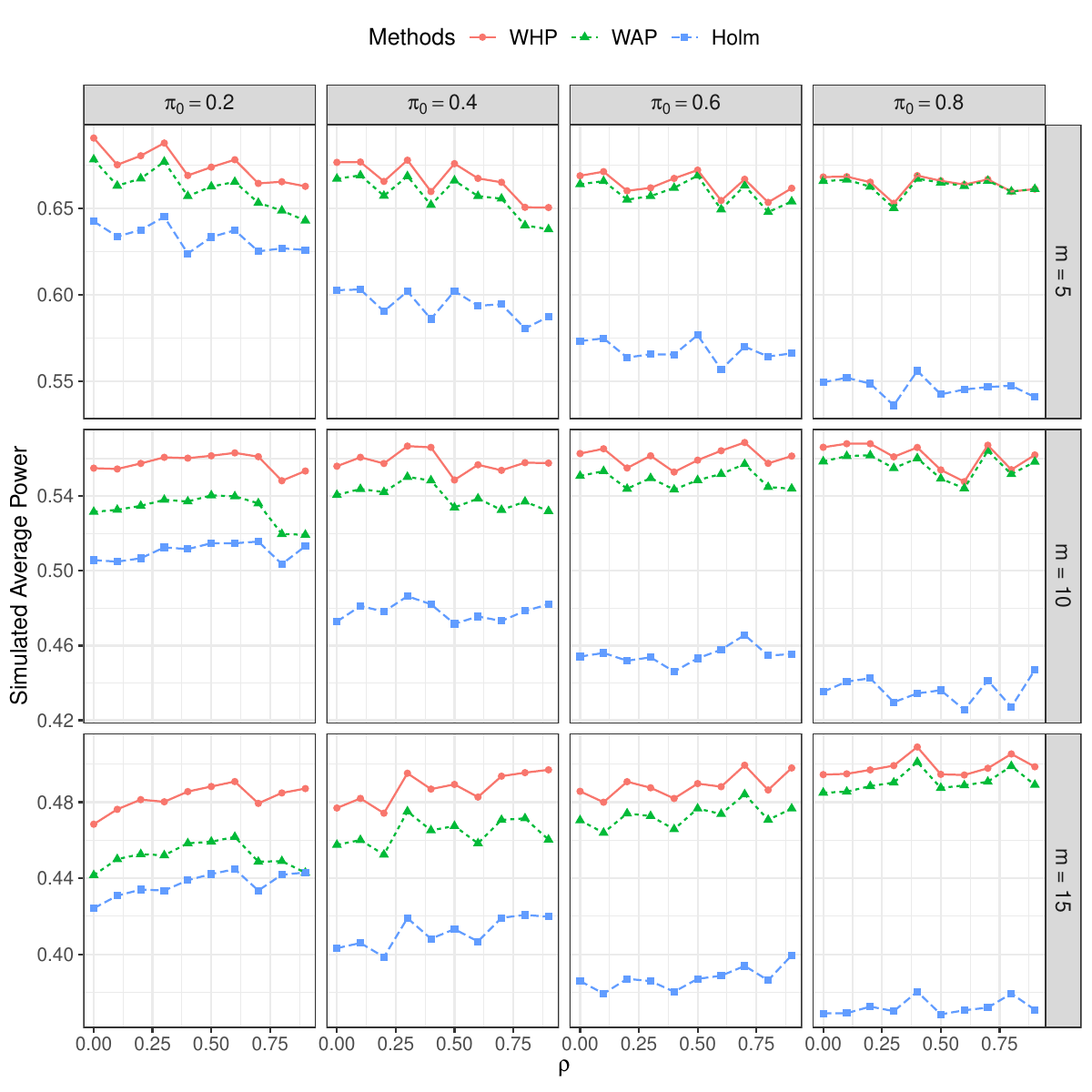}
        \caption{Power: $U(1,2)$ vs.\ $U(2,6)$.}
        \label{fig:pow_1226}
    \end{subfigure}
   \caption{Average power under two weight settings: strongly informative (high separation, left) and weakly informative (low separation, right).}
    \label{fig:sim_power_mid}
\end{figure}

\begin{figure}[htbp]
    \centering
    \begin{subfigure}{0.5\linewidth}
        \centering
        \includegraphics[width=\linewidth]{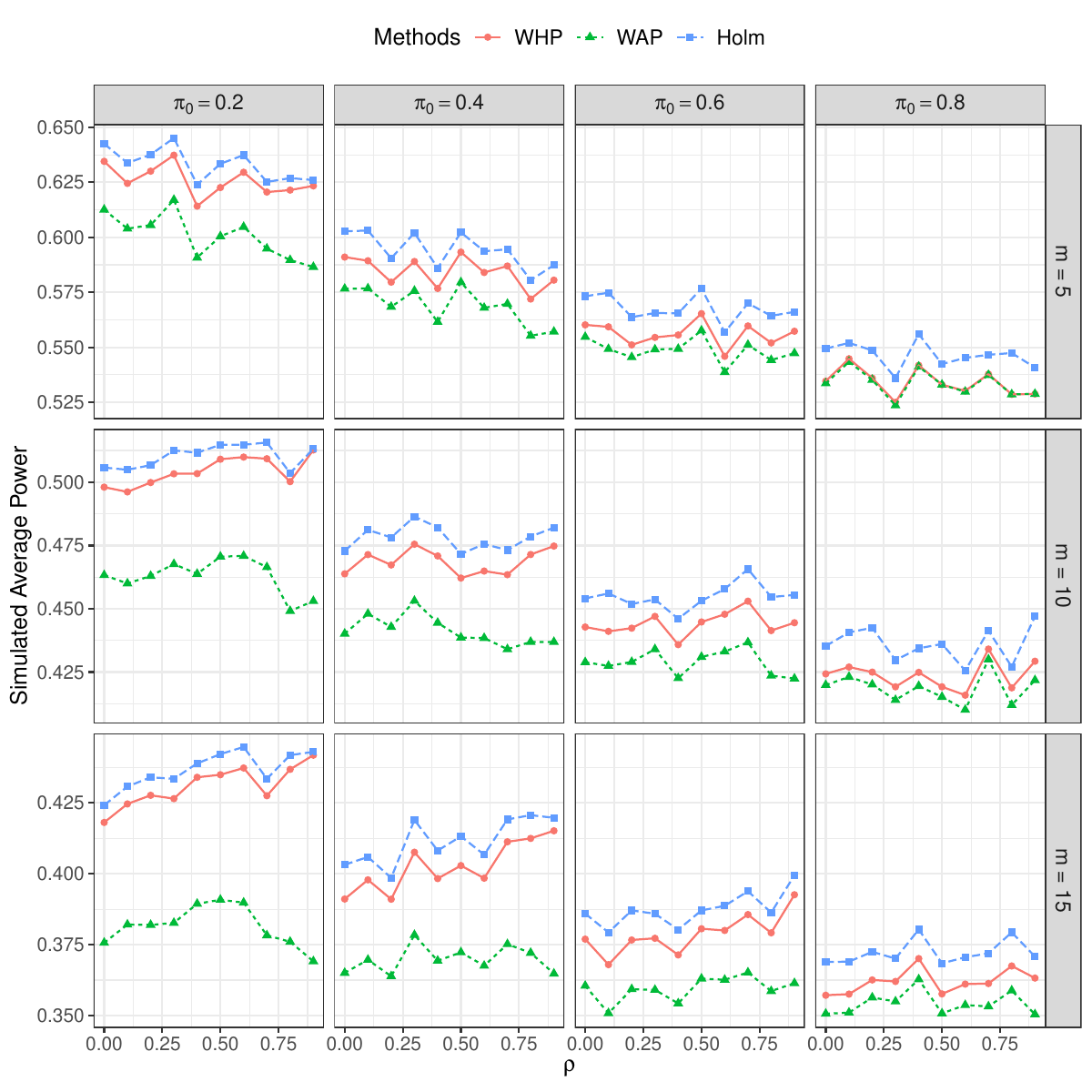}
    \end{subfigure}
    \caption{Average power of WHP, WAP, and Holm under the weight setting $U(1,6)$ for both true and false nulls, representing a non-informative (no separation) scenario.}
    \label{fig:sim_power_poor}
\end{figure}

\smallskip
We now summarize the key trends observed across different weight scenarios. Figures~\ref{fig:sim_weights_210}--\ref{fig:sim_power_poor} present the main simulation results. The findings are:  
\begin{enumerate}[(i)]
\item All three procedures (WHP, WAP, and Holm) controlled the FWER at the nominal level $\alpha$. WHP consistently achieved slightly higher (yet still valid) FWER and greater power than WAP across all scenarios. Under non-informative weights (Figure~\ref{fig:sim_power_poor}), the Holm procedure outperformed both weighted procedures, and power decreased as $\rho$ increased, particularly when the number of hypotheses was small (e.g., $m=5$). Compared with Holm, the loss of power is modest for WHP but can be substantially larger for WAP, especially when the proportion of false null hypotheses is high.
\item With informative weights (Figures~\ref{fig:sim_weights_210} and \ref{fig:sim_power_mid}), both WHP and WAP achieved higher power than Holm. Power decreased with increasing correlation, especially when the number of hypotheses was small (e.g., $m=5$). WHP was uniformly more powerful than WAP, with the advantage most pronounced under larger weight separation and higher proportions of false nulls.  
\item Proper weight specification substantially improved power for both WHP and WAP, even when the proportion of false nulls was small—contrasting with unweighted procedures such as Holm, where power typically increases with the proportion of false nulls increases.  
\end{enumerate}

\smallskip
\textit{In summary, the simulations demonstrate that WHP consistently outperforms WAP; both benefit from well-chosen weights; and poorly informative weights can negate these gains, in which case the unweighted Holm procedure may be preferable.}

\section{Real Data Analysis: Clinical Examples}\label{realdata}

To demonstrate the practical performance of WHP and WAP, we consider two clinical trial case studies.  

\subsection{ARDS Trial Example}
The first example involves patients with acute respiratory distress syndrome (ARDS) \citep{acute2000ventilation, dmitrienko2003gatekeeping}, where a novel treatment was compared with placebo. Two co-primary endpoints were evaluated:  
\begin{itemize}
    \item $H_1$: number of ventilator-free days during a 28-day period,  
    \item $H_2$: 28-day all-cause mortality.  
\end{itemize}
Two secondary endpoints were also assessed:  
\begin{itemize}
    \item $H_3$: number of ICU-free days,  
    \item $H_4$: quality of life.  
\end{itemize}

The primary endpoints were assigned unequal weights ($w_1=0.9$, $w_2=0.1$), while the secondary endpoints received equal weights ($w_3=w_4=0.5$). The raw $p$-values were $p_1=0.024$, $p_2=0.003$, $p_3=0.026$, and $p_4=0.002$, as reported in Scenario 1 of Table III in \citet{dmitrienko2003gatekeeping}.

\noindent Table~\ref{tab:table1} summarizes the raw and adjusted results under WHP and WAP.  

\begin{table}[H]
\centering
\begin{tabular}{lcccc}
\hline
Hypothesis & Weight & Raw $p$-value & Adjusted weighted $p$-value & Adjusted $p$-value \\
& & &
(WHP) & (WAP) \\
\hline
$H_1$ & 0.9 & 0.024 & 0.040 & 0.045 \\
$H_2$ & 0.1 & 0.003 & 0.040 & 0.045 \\
$H_3$ & 0.5 & 0.026 & 0.040 & 0.045 \\
$H_4$ & 0.5 & 0.002 & 0.008 & 0.008 \\
\hline
\end{tabular}
\caption{ARDS trial: weights, raw $p$-values, and adjusted results under WHP and WAP.}
\label{tab:table1}
\end{table}

\noindent Both procedures reject all four hypotheses at $\alpha = 0.05$, but WHP produces consistently smaller adjusted values, reflecting its greater power in this setting.

\subsection{Type 2 Diabetes Trial Example}
The second case study is based on a trial comparing a novel insulin formulation (Formulation A) with a standard formulation (Formulation B) in patients with Type 2 diabetes \citep{dmitrienko2007gatekeeping}. Patients were randomized into three groups (A, B, and A+B), and the primary endpoint was the mean change in hemoglobin A1c at 6 months. We examine six hypotheses with pre-specified weights $w=(6,6,5,4,2,1)$, using raw $p$-values reported in Table~4 of \citet{dmitrienko2007gatekeeping}.  

\noindent Table~\ref{tab:table2} displays the results after adjustment under WHP and WAP.  

\begin{table}[H]
\centering
\begin{tabular}{lcccc}
\hline
Hypothesis & Weight & Raw $p$-value & Adjusted weighted $p$-value & Adjusted $p$-value \\
& & &
(WHP) & (WAP) \\
\hline
$H_1$ & 6 & 0.011 & 0.0348 & 0.0348 \\
$H_2$ & 6 & 0.023 & 0.0498 & 0.0585 \\
$H_3$ & 5 & 0.006 & 0.0288 & 0.0288 \\
$H_4$ & 4 & 0.018 & 0.0498 & 0.0585 \\
$H_5$ & 2 & 0.042 & 0.0630 & 0.0630 \\
$H_6$ & 1 & 0.088 & 0.0880 & 0.0880 \\
\hline
\end{tabular}
\caption{Type 2 diabetes trial: pre-specified weights, raw $p$-values, and adjusted results under WHP and WAP.}
\label{tab:table2}
\end{table}

At $\alpha=0.05$, WHP rejects $H_1, H_2, H_3$, and $H_4$, whereas WAP rejects only $H_1$ and $H_3$. This difference arises because WHP leverages the weight distribution more efficiently, enabling detection of additional treatment effects while still maintaining strict FWER control.  

\smallskip
\textit{Together, these two case studies highlight the practical advantage of WHP over WAP: while both procedures ensure valid FWER control, WHP consistently yields smaller adjusted $p$-values and higher power, leading to more informative conclusions in clinical research.}

\section{Conclusion}\label{summary2}

The wMTPs, such as the weighted Bonferroni, weighted Holm, and weighted Hochberg methods, are widely used for controlling the FWER. However, it has not always been clear which procedure is most preferable and straightforward to apply in practice, particularly in clinical trials. In this paper, we examined two variants of the weighted Holm procedure: WHP, which orders weighted $p$-values, and WAP, which orders raw $p$-values. We compared these procedures from four perspectives: (i) representation as CTPs, (ii) graphical visualization, (iii) adjusted $p$-values, and (iv) optimality.  

\smallskip
Our theoretical results show that WHP uniformly dominates WAP in terms of power under arbitrary dependence. Furthermore, we established optimality properties: WHP is optimal among all step-down procedures based on weighted $p$-values that control the FWER, whereas WAP is optimal only under restrictive conditions on weight ratios. Simulation studies corroborated these findings, demonstrating that WHP achieves superior power while maintaining strong FWER control.  

\smallskip
Beyond power and optimality, we investigated structural properties. WHP is monotone in $p$-values and also satisfies monotonicity in its closed testing representation, which ensures consonance and enables efficient shortcut implementations. In contrast, WAP fails monotonicity under both definitions. Nevertheless, its closed testing representation remains consonant, permitting shortcut formulations. This contrast is notable: WAP does not belong to the class of CTPs with weighted Bonferroni-type local tests and lacks monotonicity, yet it still enjoys consonance.  

\smallskip
\textit{Overall, our results suggest that WHP should be the method of choice in practice. It combines stronger theoretical guarantees, more reliable empirical performance, and favorable structural properties. While future advances in generalized weighted graphical frameworks may broaden these insights, WHP currently offers the most transparent and powerful option for real-world applications.}  

A promising direction for future research is to extend weighted Holm procedures to more complex settings in which $p$-values arise from structured hypotheses, such as one- or two-way groupings. Such extensions would further enhance the applicability of weighted procedures and open new opportunities for integrating structure into multiple testing while preserving rigorous error control.

\section{Proofs} \label{append}
\subsection{Proof of Proposition \ref{eqwap}} 

\noindent Let $P_{(i)} = \min\{P_l : l \in I\}$ denote the smallest $p$-value in an index set $I$, where $i$ is the rank of this $p$-value among all $m$ $p$-values. Define  
\[
I_{(i)}^+ = \{(i), (i+1), \ldots, (m)\},
\]
the index set corresponding to the $m-i+1$ largest $p$-values. Clearly, for any set $I$ in which $P_{(i)}$ is the smallest element, we must have $I \subseteq I_{(i)}^+$.  

\smallskip
\noindent Let $\phi_I$ denote the test function of the local test for the intersection hypothesis $H_I$, i.e., $\phi_I = 1$ if $H_I$ is rejected and $\phi_I = 0$ otherwise.  

\smallskip
\noindent \textbf{Step 1.} We first show that $\phi_{I_{(i)}^+}=1$ implies $\phi_I=1$.  

\noindent If $\phi_{I_{(i)}^+}=1$, then by the local test of the CTP corresponding to WAP,  
\[
P_{(i)} \leq \frac{w_{(i)}}{\sum_{k=i}^{m} w_{(k)}} \,\alpha.
\]
Since $P_{(i)}$ is the smallest $p$-value in $I$, it follows that  
\[
\sum_{k \in I} w_{(k)} \;\leq\; \sum_{k=i}^{m} w_{(k)}.
\]
Thus,
\[
P_{(i)} = \min\{P_l : l \in I\} \;\leq\; \frac{w_{(i)}}{\sum_{k \in I} w_{(k)}} \,\alpha,
\]
which implies $\phi_I=1$.  

\smallskip
\noindent \textbf{Step 2.} By Step 1 and the closure principle, $H_{(i)}$ is rejected by the CTP if and only if  
\[
P_{(j)} \leq \frac{w_{(j)}}{\sum_{k=j}^{m} w_{(k)}} \,\alpha, \quad j=1,\ldots,i,
\]
which is precisely the rejection rule of WAP.  

\smallskip
\noindent Hence, WAP and its corresponding CTP are equivalent. \hfill$\square$  

\vspace{2mm}
\noindent The same reasoning also establishes the equivalence between WHP and its corresponding CTP, proving Proposition \ref{eqwhp} through a different approach than \citet{westfall2001optimally}.

\subsection{Proof of Proposition \ref{wapconsonant}}

\noindent Let $M=\{1,\ldots,m\}$. To establish that the CTP is consonant, it suffices to show that whenever an intersection hypothesis $H_I$ with $I \subseteq M$ is rejected, at least one of its component hypotheses $H_j$ ($j \in I$) is also rejected.

\smallskip
\noindent (i) \textbf{Case $I=M$.}  
If the global intersection hypothesis $H_M=\bigcap_{i \in M} H_i$ is rejected, then its local test satisfies  
\[
P_{(1)} \leq \frac{w_{(1)}}{\sum_{i \in M} w_i}\,\alpha.
\]
By the definition of the WAP local tests, every intersection hypothesis $H_J$ with $(1) \in J \subset M$ is also rejected. Hence, by the closure principle, the elementary hypothesis $H_{(1)}$ is rejected.  

\smallskip
\noindent (ii) \textbf{Case $I \subset M$.}  
Suppose $H_I$ is rejected by the CTP. Then we must have
\[
P_{(1)}^I \leq \frac{w_{(1)}^I}{\sum_{i \in I} w_i}\,\alpha,
\qquad
P_{(1)}^S \leq \frac{w_{(1)}^S}{\sum_{i \in S} w_i}\,\alpha
\quad \forall S \,:\, I \subset S \subseteq M,
\]
where $P_{(1)}^S$ denotes the smallest $p$-value in $S$, with corresponding weight $w_{(1)}^S$.  

\noindent Let $i_0 \in I$ be the index attaining $P_{(1)}^I$. For any subset $J$ with $i_0 \in J \subset I$, we have
\[
P_{(1)}^J = P_{(1)}^I
\quad \text{and} \quad
w_{(1)}^J = w_{(1)}^I,
\]
so that
\[
P_{(1)}^J \leq \frac{w_{(1)}^J}{\sum_{i \in J} w_i}\,\alpha.
\]
Thus $H_J$ is rejected by its local test.  

\noindent Now consider any $J \nsubseteq I$ with $i_0 \in J$, and set $S = I \cup J$. Since $I \subset S$, $H_S$ is rejected by its local test:
\[
P_{(1)}^S \leq \frac{w_{(1)}^S}{\sum_{i \in S} w_i}\,\alpha.
\]
But $P_{(1)}^S = \min\{P_{(1)}^I, P_{(1)}^J\} = P_{(1)}^J$, because $i_0 \in J$. Hence
\[
P_{(1)}^J \leq \frac{w_{(1)}^S}{\sum_{i \in S} w_i}\,\alpha 
\leq \frac{w_{(1)}^J}{\sum_{i \in J} w_i}\,\alpha,
\]
which implies that $H_J$ is rejected by its local test.  


\noindent By the closure principle, the individual hypothesis $H_{i_0}$ is therefore rejected.  

\smallskip
\noindent In conclusion, rejection of any intersection hypothesis necessarily propagates to at least one of its elementary components. Hence the WAP satisfies consonance. \hfill$\square$

\subsection{Proof of Proposition \ref{eqgraphs}}

\noindent We show that the graphical representations are equivalent to their respective procedures, WHP and WAP. Since the only difference lies in the ordering criterion (weighted $p$-values for WHP versus raw $p$-values for WAP), it suffices to establish equivalence for WHP.

\smallskip
\noindent Let $I_0=\{1,\ldots,m\}$ be the full index set. Denote by $j_1,\ldots,j_m$ the sequence of hypothesis indices selected by the graphical algorithm, and let $I_i = I_0 \setminus \{j_1,\ldots,j_i\}$ denote the set of hypotheses remaining after $i$ rejections.  

\smallskip
\noindent The proof has three parts: \textit{initialization} shows agreement with the weighted Bonferroni test; \textit{induction} verifies that the updating rules preserve this structure after each rejection; and the \textit{decision rule} coincides with WHP. The WAP case then follows analogously under raw $p$-value ordering.

---

\noindent \textbf{Step 1. Initialization.}  
At step $i=0$, the algorithm selects $P_{j_1}$ as the $p$-value corresponding to the smallest weighted $p$-value. If  
\[
P_{j_1} \leq \alpha_{j_1} = \frac{w_{j_1}}{\sum_{i \in I_0} w_i}\,\alpha,
\]
then $H_{j_1}$ is rejected; otherwise the procedure stops. This is precisely the weighted Bonferroni test applied to $I_0$.  

---

\noindent \textbf{Step 2. Induction step.}  
Suppose that after $i$ steps, the local levels are
\[
\alpha_l = \frac{w_l}{\sum_{k \in I_{i-1}} w_k}\,\alpha, \quad l \in I_{i-1},
\]
and the transition coefficients are
\[
g_{lk} = \frac{w_k}{\sum_{r \in I_{i-1}\setminus\{l\}} w_r}, \quad l,k \in I_{i-1}, \, l \neq k.
\]

\noindent When $H_{j_i}$ is rejected, the update for $\alpha_l$ gives
\[
\alpha_l \leftarrow \alpha_l + \alpha_{j_i} g_{j_i l} 
= \frac{w_l}{\sum_{r \in I_{i-1}} w_r}\,\alpha
   + \frac{w_{j_i}}{\sum_{r \in I_{i-1}} w_r}\,\alpha \cdot 
     \frac{w_l}{\sum_{r \in I_{i-1}\setminus\{j_i\}} w_r}.
\]
Simplifying,
\[
\alpha_l = \frac{w_l}{\sum_{r \in I_{i-1}\setminus\{j_i\}} w_r}\,\alpha
= \frac{w_l}{\sum_{r \in I_i} w_r}\,\alpha.
\]

\noindent For the transition coefficients, observe that
\[
\begin{aligned}
g_{lk}   &= \frac{w_k}{\sum_{r\in I_{i-1}\setminus\{l\}} w_r}, \quad
g_{l j_i} = \frac{w_{j_i}}{\sum_{r\in I_{i-1}\setminus\{l\}} w_r}, \\[6pt]
g_{j_i k} &= \frac{w_k}{\sum_{r\in I_{i-1}\setminus\{j_i\}} w_r}, \quad
g_{j_i l} = \frac{w_l}{\sum_{r\in I_{i-1}\setminus\{j_i\}} w_r}.
\end{aligned}
\]
Substituting into the update rule
\[
g_{lk} \leftarrow \frac{g_{lk}+g_{l j_i} g_{j_i k}}{1-g_{l j_i} g_{j_i l}},
\]
and simplifying, we obtain
\[
g_{lk}=\frac{w_k}{\sum_{r \in I_i\setminus\{l\}} w_r}.
\]

\noindent Thus, after each rejection, both the local levels and the transition coefficients again correspond to a weighted Bonferroni procedure on $I_i$.

---

\noindent \textbf{Step 3. Decision rule.}  
Since hypotheses are ordered by increasing $\tilde P_i = P_i/w_i$, the algorithm rejects $H_{j_i}$ if
\[
\frac{P_{j_i}}{w_{j_i}} \leq \frac{\alpha}{\sum_{k \in I_{i-1}} w_k}, \quad i=1,\ldots,\ell,
\]
which is equivalent to
\[
\tilde{P}_{(i)} \leq \frac{\alpha}{\sum_{k=i}^m w_{(k)}^*}, \quad i=1,\ldots,\ell.
\]
This is exactly the decision rule of WHP. Hence, WHP and its graphical representation are equivalent.  

---

\noindent \textbf{Step 4. WAP case.}  
For WAP, the same reasoning applies, except the ordering is by raw $p$-values ($P_{j_i} = P_{(i)}$). The decision rule becomes
\[
P_{(i)} \leq \frac{w_{(i)}}{\sum_{k=i}^m w_{(k)}}\,\alpha, \quad i=1,\ldots,\ell,
\]
which coincides with the definition of WAP.  

\smallskip
\noindent Therefore, both WHP and WAP are equivalent to their respective graphical representations. This establishes not only the formal equivalence but also provides a clear graphical interpretation of both procedures, which is often useful for intuition and implementation. \hfill$\square$

\subsection{Proof of Proposition \ref{adjprop}}  

\noindent Intuitively, the adjusted weighted $p$-values under WHP are always less than or equal to the adjusted $p$-values under WAP. Weighting reshapes the ordering of hypotheses in a way that systematically favors rejection. Equivalently, whenever WAP rejects a hypothesis, WHP—being uniformly more powerful—must also reject it. We now formalize this argument.  

\medskip
\noindent \textbf{Proof.}  
Let $P_{(1)} \leq \cdots \leq P_{(m)}$ denote the ordered raw $p$-values, and let $\tilde{P}_{(1)} \leq \cdots \leq \tilde{P}_{(m)}$ denote the ordered weighted $p$-values, where  
\[
\tilde{P}_{(r)} = \frac{P_{(i)}}{w_{(i)}}, \qquad i \in \{1,\ldots,m\}.
\]
Here, $i$ indexes the raw ordering and $r$ the weighted ordering.  

We must show that
\[
\tilde{P}_{(r)}^{\,\text{adj}} \leq P_{(i)}^{\,\text{adj}}, \qquad i=1,\ldots,m.
\]

For each $r$, define
\[
s_r = \min\Big\{\,l:\, \tilde{P}_{(r)} \leq \tfrac{P_{(l)}}{w_{(l)}} \leq \tilde{P}_{(m)}\Big\}.
\]
By construction, $s_r \leq i$. Moreover, for any $k \geq r$ we have $\tilde{P}_{(k)} \geq \tilde{P}_{(r)}$, which by minimality of $s_r$ implies $\pi(k) \geq s_r$, where $\pi$ is the permutation mapping weighted ranks to raw ranks. Thus,  
\[
\{\pi(k): k=r,\ldots,m\} \subseteq \{s_r,\ldots,m\}.
\]
Since $w_{(k)}^* = w_{(\pi(k))}$, it follows that
\[
\sum_{k=r}^m w_{(k)}^* = \sum_{k=r}^m w_{(\pi(k))} \;\leq\; \sum_{\ell=s_r}^m w_{(\ell)},
\]
because the sum of a subset of nonnegative weights cannot exceed the sum over the superset.  

Combining this with the inequality
\[
\tilde{P}_{(r)} \leq \frac{P_{(s_r)}}{w_{(s_r)}},
\]
we obtain
\[
\tilde{P}_{(r)}^{\,\text{adj}}
   = \min\!\left\{\,\tilde{P}_{(r)} \sum_{k=r}^m w_{(k)}^*, \, 1 \right\}
   \leq \min\!\left\{\,\frac{P_{(s_r)}}{w_{(s_r)}} \sum_{\ell=s_r}^m w_{(\ell)}, \, 1 \right\}
   = P_{(s_r)}^{\,\text{adj}}.
\]

Finally, by monotonicity of adjusted $p$-values,
\[
P_{(s_r)}^{\,\text{adj}} \leq P_{(i)}^{\,\text{adj}},
\]
and hence
\[
\tilde{P}_{(r)}^{\,\text{adj}} \leq P_{(i)}^{\,\text{adj}}.
\]
This completes the proof. \hfill$\square$

\subsection{Proof of Theorem \ref{optwhp}}  

\noindent To establish the optimality of WHP, it suffices to show that the bound $\text{FWER} \leq \alpha$ is sharp, i.e., equality can be attained by constructing a suitable joint distribution of the $p$-values $(P_1,\ldots,P_m)$.  

\vspace{2mm}  
\noindent We work under the least favorable configuration (LFC), where all false null $p$-values are identically zero and all true null $p$-values are i.i.d.\ $\text{Unif}(0,1)$. This setting is standard, since under arbitrary dependence the LFC yields the maximal error rate (see \citeauthor{finner2001false}, \citeyear{finner2001false}). Let $m_0$ denote the number of true nulls, and assume without loss of generality that $H_1,\ldots,H_{m_0}$ are true.  

\vspace{2mm}  
\noindent Define the following joint distribution. Select exactly one true null $H_i$, $i \in \{1,\ldots,m_0\}$, with probability  
\[
\frac{w_i}{\sum_{k=1}^{m_0} w_k}.
\]  
For this chosen $i$, set $P_i = w_i U_1$ with $U_1 \sim \text{Unif}\!\left(0,1/\sum_{k=1}^{m_0} w_k\right)$. Then  
\[
P_i \sim \text{Unif}\!\left(0, \tfrac{w_i}{\sum_{k=1}^{m_0} w_k}\right), 
\qquad \tilde{P}_i = U_1.
\]  

\noindent For the remaining true nulls $H_j$, $j \neq i$, let  
\[
P_j = w_j U_2^j, 
\qquad U_2^j \sim \text{Unif}\!\left(1/\sum_{k=1}^{m_0} w_k,\,1/w_j\right),
\]  
independent of $U_1$. Then  
\[
P_j \sim \text{Unif}\!\left(\tfrac{w_j}{\sum_{k=1}^{m_0} w_k}, 1\right),
\qquad \tilde{P}_j = U_2^j.
\]  

\noindent Unconditionally, each $P_i$ has the mixed distribution  
\[
P_i \sim \frac{w_i}{\sum_{k=1}^{m_0} w_k}\, \text{Unif}\!\left(0, \tfrac{w_i}{\sum_{k=1}^{m_0} w_k}\right) 
+ \left(1-\tfrac{w_i}{\sum_{k=1}^{m_0} w_k}\right) \text{Unif}\!\left(\tfrac{w_i}{\sum_{k=1}^{m_0} w_k}, 1\right).
\]  

\noindent \textbf{Verification of uniformity.}  
- If $u \leq w_i/\sum_{k=1}^{m_0} w_k$,  
\[
\Pr(P_i \leq u) 
= \frac{w_i}{\sum_{k=1}^{m_0} w_k}\cdot \frac{u}{w_i/\sum_{k=1}^{m_0} w_k} 
= u.
\]  
- If $u > w_i/\sum_{k=1}^{m_0} w_k$,  
\[
\Pr(P_i \leq u) 
= \frac{w_i}{\sum_{k=1}^{m_0} w_k} 
+ \Big(1-\tfrac{w_i}{\sum_{k=1}^{m_0} w_k}\Big)\frac{u-w_i/\sum_{k=1}^{m_0} w_k}{1-w_i/\sum_{k=1}^{m_0} w_k} 
= u.
\]  

\noindent Thus, each true null $p$-value is marginally $\text{Unif}(0,1)$, while exactly one weighted $p$-value satisfies $\tilde{P}_i \leq 1/\sum_{k=1}^{m_0} w_k$.  

\vspace{2mm}  
\noindent Therefore,  
\[
\begin{split}
\text{FWER}
&= \Pr(V \geq 1) \\
&= \Pr\!\left(\min_{i \in I_0} \tilde{P}_i \leq \tfrac{\alpha}{\sum_{k=1}^{m_0} w_k}\right) \\
&= \Pr\!\left(U_1 \leq \tfrac{\alpha}{\sum_{k=1}^{m_0} w_k}\right) \\
&= \frac{\alpha/\sum_{k=1}^{m_0} w_k}{1/\sum_{k=1}^{m_0} w_k} \\
&= \alpha,
\end{split}
\]  
where the second equality holds by construction, using the definitions of both the LFC and the WHP.

\vspace{2mm}  
\noindent Hence, WHP achieves the bound $\text{FWER} = \alpha$ under arbitrary dependence and is therefore unimprovable without sacrificing FWER control. \hfill$\square$

\subsection{Proof of Proposition \ref{dominate}}  

\noindent Let $\mathbf{q} = (q_1,\ldots,q_m)$ be the $p$-values for hypotheses $H_1,\ldots,H_m$, with positive weights $w_1,\ldots,w_m$. Define the weighted $p$-values  
\[
\tilde{q}_i = \frac{q_i}{w_i}, \quad i=1,\ldots,m,
\]
and let $\tilde{q}_{(1)} \leq \cdots \leq \tilde{q}_{(m)}$ denote their order, with $w_{(i)}^*$ and $H_{(i)}^*$ denoting the associated weights and hypotheses.

\vspace{2mm}
\noindent Let $\mathcal{M}^w$ be a weighted step-down procedure with nondecreasing critical values $\alpha_1 \leq \cdots \leq \alpha_m$ that controls the FWER at level $\alpha$. Suppose it rejects $H_{(1)}^*,\ldots,H_{(r)}^*$. Define  
\[
l = \sum_{k=r}^m w_{(k)}^*, 
\qquad 
\tau = \min\{\alpha_r,1/l\}, 
\qquad 
\beta = l\tau.
\]

\vspace{2mm}
\noindent Consider the least favorable configuration where all false null $p$-values are zero and all true null $p$-values are i.i.d.\ $\mathrm{Unif}(0,1)$. Suppose $m_1$ hypotheses are false; without loss of generality, let these be the first $m_1$ hypotheses. Then, for $r = m_1+1$, we obtain $l = \sum_{k=r}^m w_k$.

\vspace{2mm}
\noindent Generate the weighted $p$-values $\tilde{q}_r,\ldots,\tilde{q}_m$ for the true null hypotheses $H_r,\ldots,H_m$ as follows:
\begin{itemize}
    \item Choose $j \in \{r,\ldots,m,0\}$ with probabilities $w_r\tau,\ldots,w_m\tau,\,1-\beta$.  
    \item If $j \neq 0$, let $\tilde{q}_j \sim U[0,\tau]$ and $\tilde{q}_i \sim U[\tau,\,1/w_i]$ for $i \neq j$.  
    \item If $j=0$, let all $\tilde{q}_i \sim U[\tau,\,1/w_i]$ for $i=r,\ldots,m$.  
\end{itemize}
In either case, the induced raw $p$-values $q_i = w_i\tilde{q}_i$, $i=r,\ldots,m$, are uniform on $[0,1]$. Moreover, if $j \neq 0$, then
\begin{equation}\label{rth}
\tilde{q}_{(r)} = \min_{r \leq i \leq m} \tilde{q}_i \leq \tau \leq \alpha_r.
\end{equation}

\vspace{2mm}
\noindent For indices $1,\ldots,r-1$, we have $\tilde{q}_{(j)}=0$. Thus, with probability at least $\beta$, $\mathcal{M}^w$ rejects at least one true null. Consequently,  
\[
\mathrm{FWER} \;\geq\; \beta.
\]
Since $\mathcal{M}^w$ controls FWER at $\alpha$, it follows that $\beta \leq \alpha$, i.e.,
\[
l\tau \leq \alpha.
\]
Since $\tau = \min\{\alpha_r, 1/l\}$ and $\alpha \in (0,1)$, it follows that
\[
\tau \leq \frac{\alpha}{l}
\quad \Rightarrow \quad
\alpha_r \leq \frac{\alpha}{\sum_{k=r}^m w_k}
= \frac{\alpha}{\sum_{k=r}^m w_{(k)}^*}.
\]

\vspace{2mm}
\noindent This upper bound coincides with the rejection threshold of the WHP. Since $m_1$ is arbitrary, the inequality holds for all $r=1,\ldots,m$. Therefore,  
\[
\mathcal{M}^w \preceq \mathrm{WHP}. 
\]
\hfill$\square$



\subsection{Proof of Proposition \ref{optwap}}  

\noindent The argument parallels the proof of Theorem \ref{optwhp}, relying on the same least favorable configuration (LFC) and joint distribution of $p$-values.  

\vspace{2mm}  
\noindent Let  
\[
j^* := \arg\min_{1 \leq i \leq m_0} w_i, 
\qquad w_{j^*} = \min_{1 \leq i \leq m_0} w_i
\]  
denote the index and value of the smallest weight among the true null hypotheses. Under the LFC, exactly one true null hypothesis $H_i$, $i \in \{1,\ldots,m_0\}$, is selected with probability $\tfrac{w_i}{\sum_{k=1}^{m_0} w_k}$, and its $p$-value follows  
\[
P_i \sim \text{Unif}\!\left(0, \tfrac{w_i}{\sum_{k=1}^{m_0} w_k}\right).
\]  
For each non-selected true null, the $p$-value is distributed as  
\[
P_j \sim \text{Unif}\!\left(\tfrac{w_j}{\sum_{k=1}^{m_0} w_k}, 1\right), \quad j \in I_0 \setminus \{i\}.
\]  
Denote the collection of true null $p$-values by $\mathbb{P}_0 = \{P_1,\ldots,P_{m_0}\}$.  

\vspace{2mm}  
\noindent Consider first the case where $H_{j^*}$ is selected (with probability $\tfrac{w_{j^*}}{\sum_{k=1}^{m_0} w_k}$). Then the probability of erroneously rejecting it under WAP is  
\[
\Pr\!\left(P_{j^*} \leq \tfrac{w_{j^*}}{\sum_{k=1}^{m_0} w_k}\alpha, \; P_{j^*} \leq P_i \;\;\forall P_i \in \mathbb{P}_0 \setminus \{P_{j^*}\}\right)  
= \tfrac{w_{j^*}}{\sum_{k=1}^{m_0} w_k}\,\alpha.
\]  

\noindent Next, suppose that some other $H_i$, $i \in I_0 \setminus \{j^*\}$, is selected. Then  
\[
\Pr\!\left(P_i \leq \tfrac{w_i}{\sum_{k=1}^{m_0} w_k}\alpha, \; P_i \leq P_l \;\;\forall P_l \in \mathbb{P}_0 \setminus \{P_i\}\right)  
= \tfrac{w_i}{\sum_{k=1}^{m_0} w_k}\,\alpha,
\]  
provided that the WAP rejection threshold for $H_i$ does not exceed the upper bound associated with $P_{j^*}$, i.e.,  
\[
\tfrac{w_i}{\sum_{k=1}^{m_0} w_k}\alpha \;\leq\; \tfrac{w_{j^*}}{\sum_{k=1}^{m_0} w_k} 
\quad \iff \quad \tfrac{w_{j^*}}{w_i} \geq \alpha.
\]  

\vspace{2mm}  
\noindent Thus, the condition $\tfrac{w_{j^*}}{w_i} \geq \alpha$ for all $i \in I_0 \setminus \{j^*\}$ guarantees that the probability of a Type~I error is bounded appropriately. Consequently,  
\[
\text{FWER} \;=\; \sum_{i=1}^{m_0} \Pr\!\left(P_i \leq \tfrac{w_i}{\sum_{k=1}^{m_0} w_k}\alpha, \; P_i \leq P_l \;\;\forall P_l \in \mathbb{P}_0 \setminus \{P_i\}\right) \;=\; \alpha.
\]  

\vspace{2mm}  
\noindent Therefore, whenever  
\[
\frac{\min_{i} w_i}{\max_{i} w_i} \,\geq\, \alpha,
\]  
the WAP attains equality $\text{FWER} = \alpha$ and is thus optimal in the sense that no critical value can be increased without violating FWER control. \hfill$\square$  

\vspace{2mm}  
\noindent \textit{Intuition.} The smallest weight $w_{j^*}$ acts as a bottleneck: if it is not too small relative to the largest weight, then every rejection threshold remains consistent with the uniform distribution of true null $p$-values. This balance ensures that the FWER is pushed exactly to the boundary $\alpha$, establishing the optimality of WAP.

\section*{Acknowledgements}
The authors are grateful to the Editor, the Associate Editor, and the reviewers for their careful evaluation and constructive comments, which have helped improve the clarity and presentation of the manuscript. The authors also thank Sanat Sarkar for his helpful discussions and valuable feedback.

\bigskip

\bibliographystyle{plainnat}
\bibliography{ref}   

\end{document}